\documentclass[journal]{IEEEtran}
\IEEEoverridecommandlockouts
\usepackage{verbatim}
\usepackage{epsfig}
\usepackage{amsmath}
\usepackage{lipsum}
\usepackage{amsthm}
\usepackage{amssymb}
\usepackage{psfrag}
\usepackage{soul}
\usepackage[T1]{fontenc}
\usepackage[ruled]{algorithm2e}
\usepackage{mathtools, nccmath}
\usepackage{cuted}
\usepackage{multirow}
\usepackage[table,xcdraw]{xcolor}
\usepackage{etoolbox}
\usepackage{textcomp}
\usepackage{xcolor}
\usepackage{url}
\usepackage{dsfont}
\usepackage{mathtools}
\usepackage{bbm}

\usepackage[usenames,dviIBRames]{pstricks}
\usepackage{etoolbox}
\usepackage{graphicx}
\graphicspath{{Images/}}
\ifCLASSOPTIONcompsoc
\usepackage[caption=false,font=footnotesize,labelfont=sf,textfont=sf]{subfig}
\else
\usepackage[caption=false,font=footnotesize]{subfig}
\fi
\usepackage{stfloats}
\makeatletter
\patchcmd{\@makecaption}
  {\scshape}
  {}
  {}
  {}
\makeatother
\usepackage[noadjust]{cite}

\usepackage[font=footnotesize,skip=0pt]{caption}

\newtheorem{theorem}{Theorem}[]
\newtheorem{lemma}{Lemma}[]

\newcommand{\admm}{\textbf{\texttt{DC-DistADMM}}}
\newcommand{\ii}{\mathrm{i}}
\newcommand{\jj}{\mathrm{j}}
\newcommand{\s}{\mathrm{s}}
\newcommand{\kk}{\mathrm{k}}

\newcommand{\sst}{\mathrm{ss}}
\newcommand{\Vnom}{V_\mathrm{nom}}
\newcommand{\Qnom}{Q^\mathrm{nom}}
\newcommand{\unom}{u^\mathrm{nom}}

\newcolumntype{L}[1]{>{\raggedright\arraybackslash}p{#1}}

\DeclareMathOperator*{\argmin}{argmin}
\DeclareMathOperator*{\minimize}{minimize}

\definecolor{color1}{rgb}{0,0,0}

\begin{document}
\title{Analysis of a Distributed Optimization-Based Control Architecture for Inverter-Interfaced \\ Virtual Power Plants}
\author{Vivek Khatana$^{\dagger}$,
        Soham Chakraborty$^{\ddagger}$,
        Murti V. Salapaka$^{\wr}$
\thanks{$^{\dagger}$Department of Mechanical Science and Engineering, University of Illinois at Urbana-Champaign, IL, USA {\tt\scriptsize \{vkhatana\}@\{illinois.edu\}}, $^{\ddagger}$Department of Electrical Engineering, Indian Institute of Science, Karnataka, India {\tt\scriptsize \{schakraborty@iisc.ac.in\}}, $^{\wr}$Department of Electrical and Computer Engineering, University of Minnesota, MN, USA {\tt\scriptsize \{murtis@umn.edu\}}}
\thanks{The research was conducted when Vivek Khatana and Soham Chakraborty were at the University of Minnesota, Twin Cities, with the support of the United States Department of Energy via grant~DE-CR$0000040$.
}
}
\maketitle



\section{The Virtual Power Plant Control Setup}\label{sec:VPP_description}
We consider secondary-level control~\cite{Hierarchical_Control} for reactive power sharing and voltage and frequency regulation in the VPP. The $P$-$\omega$/$Q$-$V$ droop control law~\cite{mu_syn} for the grid forming inverter-interfaced distributed generation (GFM-IIDG) is given by:
\begin{align}
    \omega_\mathrm{i} &= \omega_\mathrm{nom} - r_{\omega_\mathrm{i}} P_\mathrm{i}^\mathrm{avg}, \ \ P_\mathrm{i}^\mathrm{avg} =  \frac{1}{(\tau_{P_\ii} s+1)}P_\mathrm{i}, \label{eq:w_droop}\\
     V_\ii &= \Vnom - r_{V_\ii}Q_\ii^\mathrm{avg} + U_\ii^\star,\ \ Q_\ii^\mathrm{avg} =  \frac{1}{(\tau_{Q_\ii} s+1)}Q_\ii \label{eq:v_droop_0}
\end{align}
where $\omega_\mathrm{nom}$, $\omega_\mathrm{i}, \Vnom, V_\ii$ are the nominal and measured frequency and voltages, respectively, and $r_{\omega_\mathrm{i}}, r_{V_\ii} \in \mathbb{R}_{>0}$ are the droop coefficients. $P_\mathrm{i}$ and $Q_\ii$ are the instantaneous active and reactive power given by 
\begin{align}
    \hspace{-0.1in} P_\mathrm{i} &= G_\mathrm{ii} V_\ii^2 - \sum_{\kk \in N_\mathrm{i}} B_\mathrm{ik} V_\ii V_\mathrm{k} (\delta_\mathrm{i} - \delta_\mathrm{k}) - P_\mathrm{i}^\mathrm{gfl} \label{eq:activepower},\\
    \hspace{-0.12in}  Q_\ii & = - B_\mathrm{ii} V_\ii^2 +  \sum_{\kk \in N_\mathrm{i}} B_\mathrm{ik} V_\ii V_\mathrm{k}.\label{eq:reactivepower}
\end{align}
Here, the complex admittance, $Y_\mathrm{ik} = Y_\mathrm{ki} \in \mathbb{C}$, between buses $\ii \in \{1,2,\dots,n_{\mathrm{gfm}}\}$ and $\kk \in \{1,2,\dots,n_{\mathrm{gfm}}\}$ are represented as $Y_\mathrm{ik} := -j B_{\ii \kk} = -j B_{\kk \ii} =: Y_\mathrm{ki}$ with $B_{\ii \kk} = B_{\kk \ii} < 0$. The set of neighbors of a bus $\ii \in \{1,2,\dots,n_{\mathrm{gfm}}\}$ is defined as $N_\mathrm{i}:= \{\kk | \kk \in \{1,2,\dots,n_{\mathrm{gfm}}\}, \kk \neq \ii, Y_\mathrm{\ii\kk} \neq 0 \}$ and $Y_{\ii \ii} = G_{\ii \ii} + j (B^{\mathrm{sh}}_{\ii} + \sum_{\kk \in N_\mathrm{i}} B_\mathrm{ik}) = G_{\ii \ii} + j B_{\ii \ii}$ where, $G_{\ii \ii} > 0$ is the shunt conductance, $B^{\mathrm{sh}}_{\ii} < 0$ is the shunt susceptance and $B_{\ii \ii} < 0$. $P_\mathrm{i}^\mathrm{avg}$ and $Q_\ii^\mathrm{avg}$ in~\eqref{eq:w_droop}-\eqref{eq:v_droop_0} are the averaged active and reactive power injections of GFM-IIDG $\ii$, obtained by passing the instantaneous active and reactive power $P_\mathrm{i}$ and $Q_\ii$ through low-pass filters with time constants, $\tau_{P_\ii}, \tau_{Q_\ii} \in \mathbb{R}_{>0}$ respectively. 

Signals $U_\ii^\star(t), \ii \in \{1,2,\dots, n_{\mathrm{gfm}}\}$ are determined using the following design law, 
\begin{align}\label{eq:U_i_design_law}
    U_\ii^\star(t) := & \ u_\ii(t) - \beta_{Q_\ii} r_{V_\ii} Q^\mathrm{avg}_\ii(t) + \beta_{V_\ii} (\Vnom - V_\ii(t)),
\end{align} 
where $\beta_{V_\ii}, \beta_{Q_\ii} \in \mathbb{R}$ are design hyper-parameters. With the compensation signals $U_\ii^\star$ in~\eqref{eq:U_i_design_law} the control law~\eqref{eq:v_droop_0} becomes
\begin{align}\label{eq:voltage_reference}
    V_\ii &=  V_\mathrm{nom} -  r_{V_\ii}\frac{(1 + \beta_{Q_\ii})}{(1 + \beta_{V_\ii})} Q_\ii^\mathrm{avg} + \frac{1}{(1 + \beta_{V_\ii})}u_\ii,
\end{align}
where, $u_\ii(t), t\in(t_\s,t_{\s+1}]$, is updated from the last value $u_\ii(t_\s)$ as follows\footnote{The sampling instants satisfy $t_{\s+1}:=t_\s+\Delta_\s+\Delta$ for $\s\in\{0,1,2,\dots\}$}
\begin{equation*}
     u_\ii(t) = \hspace{-0.035in} \Bigg\{  \hspace{-0.02in} \begin{aligned} 
     & u_\ii(t_\s), t \in (t_\s, t_\s + \Delta_\s),\\ 
     & \textstyle u_\ii(t_\s) + \left[\frac{(t - t_\s - \Delta_\s)(x_{\s,\ii} - u_{\ii}(t_\s))}{\Delta}\right], t \in (t_\s + \Delta_\s, t_{\s+1}]
    \end{aligned} 
\end{equation*}
for all $\ii\in\{1,\dots,n_{\mathrm{gfm}}\}$.  Here $x_{\s,\ii}$ denotes the solution of the following problem instantiated with measurements $[V_\ii(t_\s), \ Q^\mathrm{avg}_\ii(t_\s)] \in \mathbb{R}^2, \ii \in \{1,\dots, n_{\mathrm{gfm}}\}$:
\begin{equation}\label{eq:gfm_optimization}
\begin{aligned}
    x_\s &:= \argmin_{x_{\s,1}, \dots, x_{\s,n_\mathrm{gfm}}}  \sum_{\ii=1}^{n_\mathrm{gfm}} \ \frac{1}{2}(x_{\s,\ii} - \alpha_\ii (t_\s))^2 \\
    \mbox{subject to} \ x_{\s,\ii} &= x_{\s,\jj}, \  \forall \ii,\jj \in \{1,\dots,n_\mathrm{gfm}\}, \\
    x_{\s,\ii} &\in D_\ii^{V_\Delta}(t_\s), \  \forall \ii \in \{1,\dots,n_\mathrm{gfm}\}, \\ 
     \alpha_\ii (t_\s) &:= (1+\beta_{Q_\ii})(\Vnom - V_\ii(t_\s))  \\ 
    & \hspace{0.15in} - \beta_{V_\ii} r_{V_\ii} Q^\mathrm{avg}_\ii (t_\s) \ \forall \ii \in \{1,\dots,n_\mathrm{gfm}\}.
\end{aligned}
\end{equation}
Here, $D_\ii^{V_\Delta}(t_\s) := \big\{ \zeta \big | | \zeta - (1+\beta_{Q_\ii}) r_{V_\ii} Q^\mathrm{avg}_\ii (t_\s) + \beta_{V_\ii} (\Vnom - V_\ii(t_\s)) | \leq V_\Delta \big\}$ with a prescribed hyper-parameter $V_\Delta \in \mathbb{R}_{>0}$.

In addition,  the secondary-level controller designs active power references $P_\ii^\star$, $\ii \in {1,\dots,n_{\mathrm{gfl}}}$, for the primary-level $\mathrm{PQ}$-dispatch controllers of the grid-following (GFL)-IIDGs as
\begin{align}\label{eq:Pi_star_design_law}
P_\ii^\star(t) := P_\ii^{\min} + p_\ii(t)(P_\ii^{\max} - P_\ii^{\min}).
\end{align}
Here, $P_\ii^{\min}$ and $P_\ii^{\max}$ denote the minimum and maximum active power generation capacities of GFL-IIDG $\ii$, respectively, and $p_\ii(t),t\in(t_\s,t_{\s+1}]$, are updated from $p_\ii(t_\s)$ as
\begin{equation*} 
    p_\ii(t) = \hspace{-0.035in} \Bigg\{  \hspace{-0.02in} \begin{aligned}
            & p_\ii(t_\s),  t \in (t_\s, t_\s + \Delta_\s),\\
          & \textstyle p_\ii(t_\s) + \left[\frac{(t - t_\s - \Delta_\s)(y_{\s,\ii} - p_\ii(t_\s))}{\Delta} \right], t \in (t_\s + \Delta_\s, t_{\s+1}]
    \end{aligned}
\end{equation*}
for all $\ii\in\{1,\dots,n_{\mathrm{gfl}}\}$, where $y_{\s,\ii}$ denotes the solution of the following problem
\begin{equation}\label{eq:gfl_optimization}
    \begin{aligned}
    y_\s := & \argmin_{y_{\s,1},\dots, y_{\s,n_\mathrm{gfl}}} \sum_{\ii=1}^{n_\mathrm{gfl}} \ \frac{1}{2}(P_{y_{\ii}}^\star)^2 + b_\ii P_{y_{\ii}}^\star + c_\ii \\
   &\mbox{subject to} \  0 \leq y_{\s,\ii} \leq 1 \ \forall \ii \in \{1,\dots, n_\mathrm{gfl}\}, \\
   & \hspace{0.6in} \sum_{\ii=1}^{n_\mathrm{gfl}} P_{y_{\ii}}^\star = \textstyle \rho_\mathrm{d}(t_\s) \\
   &\hspace{-0.25in} P_{y_{\ii}}^\star := P_\ii^{\min} + y_{\s,\ii} (P_\ii^{\max} - P_\ii^{\min}) \ \forall \ii \in \{1,\dots, n_\mathrm{gfl}\},
\end{aligned}
\end{equation}
where $y_\s=[y_{\s,1},\dots,y_{\s,n_\mathrm{gfl}}]\in\mathbb{R}^{n_\mathrm{gfl}}$ denotes the solution of the optimization problem, and $P_\ii^{\min}$ and $P_\ii^{\max}$ are the active power limits of GFL-IIDG $\ii$, as in~\eqref{eq:Pi_star_design_law} and $\rho_\mathrm{d}(t_\s)$ is the total active power requirement. The GFL-IIDG power set-point estimate in~\eqref{eq:gfl_optimization} is parametrized as
$P_{y_\ii}^\star:=P_\ii^{\min}+y_{\s,\ii}(P_\ii^{\max}-P_\ii^{\min})$, with $0\leq y_{\s,\ii}\leq 1$, ensuring that the resulting active power set-points satisfy capacity constraints. The solution to~\eqref{eq:gfl_optimization}  determines the GFL-IIDG power reference $P_\ii^\star$ in~\eqref{eq:Pi_star_design_law}. Under the fast inner-current control loop, the GFL-IIDG output $P_\ii^\mathrm{gfl}$ tracks $P_\ii^\star$ with negligible, assumed zero, error~\cite{yazdani}. Therefore, 
\begin{align}\label{eq:gfl_power}
    \textstyle P_\ii^\mathrm{gfl} (t) = P_\ii^\star(t) = P_\ii^{\min} + p_\ii(t)(P_\ii^{\max} - P_\ii^{\min}) \ \forall t.
\end{align}
Problems~\eqref{eq:gfm_optimization}-\eqref{eq:gfl_optimization} can be written as
\begin{equation}\label{eq:opt_prob_equivalent}
    \begin{aligned}
	& \minimize_{\zeta_1, \zeta_2, \dots, \zeta_{n_\zeta}} \quad \sum_{\ii=1}^{n_\zeta} f_\ii(\zeta_\ii)\\
	& \mbox{subject to} \hspace{0.2in} \zeta_\ii = \zeta_\jj \ \forall \ii, \jj \in \{1,2,\dots,n_\zeta\}, \\
    & \hspace{0.75in} \zeta_\ii \in \mathcal{X}_\ii \  \forall \ii \in \{1,\dots,n_\zeta\}, \\ 
    & \hspace{0.75in} \overline{H}_\ii \zeta_\ii  = \overline{h}_\ii \  \forall \ii \in \{1,\dots,n_\zeta\},\\
    & \hspace{0.75in} \underline{H}_\ii \zeta_\ii \leq \underline{h}_\ii \  \forall \ii \in \{1,\dots,n_\zeta\},
\end{aligned}
\end{equation}
Here, each IIDG maintains a local decision variable $\zeta_\ii$. The objective $f_\ii:\mathbb{R}^{n_\zeta}\rightarrow\mathbb{R}$, constraint set $\mathcal{X}_\ii\subseteq\mathbb{R}^{n_\zeta}$, equality constraints $\overline{H}_\ii\zeta_\ii=\overline{h}_\ii$, and inequality constraints $\underline{H}_\ii\zeta_\ii\leq\underline{h}_\ii$ are local to IIDG $\ii$ and can be enforced in a decentralized manner. The consensus constraints $\zeta_\ii=\zeta_\jj$ couple the local decisions and require network-level coordination. We therefore solve~\eqref{eq:opt_prob_equivalent} using the distributed discrete-time $\admm$ algorithm developed in~\cite{ADMM_tac} that has a geometric rate of convergence. After $\theta$ iterations,
\begin{align}\label{eq:admm_iter_comp}
\|\zeta_\ii^{(\theta)} - \zeta^\star \|^2 \leq \Upsilon (0.75)^{\theta}, \quad \mbox{for all} \ \ii,
\end{align}
where $\Upsilon$ is a known constant determined from the problem data. Hence, accuracy $\epsilon$ requires only $\theta_\epsilon=O(\log(1/\epsilon))$ iterations, and the sampling interval $\Delta_\s$ in the sampled-data secondary controller can be chosen accordingly. 
\section{Stability Analysis}\label{sec:stability_analysis}
Using~\eqref{eq:w_droop},~\eqref{eq:v_droop_0}, GFM-IIDG $\ii$ has the closed-loop dynamics,
\begin{align}
    \dot{\delta}_\ii &= \omega_\ii, \quad \tau_{P_\ii}\dot{\omega_\mathrm{i}} = -(\omega_\mathrm{i} - \omega_\mathrm{nom}) - r_{\omega_\mathrm{i}}P_\mathrm{i}, \label{eq:freq_dyan}\\
    \tau_{Q_\ii}\dot{V_\ii} &= -(V_\ii - \Vnom) - r_{V_\ii}Q_\ii + U_\ii^\star +  \tau_{Q_\ii}\dot{U}_\ii^\star. \label{eq:V_dyan}
\end{align}

\vspace{-0.2in}
\subsection{Analysis of GFM-IIDG Voltage Dynamics Loop}
\noindent Substituting~\eqref{eq:U_i_design_law} in~\eqref{eq:V_dyan} we get,
\begin{align}\label{eq:V_dot_expanded}
    \tau_{Q_\ii}\dot{V_\ii} &= -(V_\ii - \Vnom) - r_{V_\ii}Q_\ii + U_\ii^\star +  \tau_{Q_\ii}\dot{U}_\ii^\star \nonumber  \\
    & = -(V_\ii - \Vnom) - r_{V_\ii}Q_\ii + u_\ii - \beta_{Q_\ii} r_{V_\ii} Q^\mathrm{avg}_\ii \\
    & \hspace{-0.15in} + \beta_{V_\ii} (\Vnom - V_\ii) + \tau_{Q_\ii}\dot{u}_\ii -  \tau_{Q_\ii}\beta_{Q_\ii} r_{V_\ii} \dot{Q}^\mathrm{avg}_\ii - \tau_{Q_\ii}\beta_{V_\ii} \dot{V}_\ii \nonumber. 
\end{align}
Define
\begin{align}\label{eq:beta_constants}
    \hspace{-0.1in} \tilde{\beta}_{Q_\ii} := (1 + \beta_{Q_\ii}), \ \tilde{\beta}_{V_\ii} := (1 + \beta_{V_\ii}), \ \tilde{\beta}_\ii := \tilde{\beta}_{Q_\ii}/\tilde{\beta}_{V_\ii} \ \forall \ii.
\end{align}
Substituting $\tau_{Q_\ii} \dot{Q}^\mathrm{avg}_\ii = Q_\ii - Q^\mathrm{avg}_\ii$ in~\eqref{eq:V_dot_expanded} we get for all $\ii$,
\begin{align}\label{eq:V_dot_expanded_intermediate}
    \tau_{Q_\ii}(1 + \beta_{V_i})\dot{V_\ii} &= -(1+\beta_{V_\ii})(V_\ii - \Vnom) - r_{V_\ii}Q_\ii \nonumber \\
    & \hspace{-0.4in} - \beta_{Q_\ii} r_{V_\ii} Q^\mathrm{avg}_\ii -  \beta_{Q_\ii} r_{V_\ii} (Q_\ii - Q^\mathrm{avg}_\ii) + u_\ii + \tau_{Q_\ii}\dot{u}_\ii \nonumber \\
    & \hspace{-0.8in} = -\tilde{\beta}_{V_\ii}(V_\ii - \Vnom) - \tilde{\beta}_{Q_\ii}r_{V_\ii}Q_\ii + u_\ii + \tau_{Q_\ii}\dot{u}_\ii,
\end{align}
Therefore, using~\eqref{eq:V_dot_expanded_intermediate} we have for all $\ii \in \{1,2,\dots,n_\mathrm{gfm}\}$,
\begin{align}\label{eq:V_dot_expanded_final}
    \dot{V_\ii} = \textstyle -\frac{1}{\tau_{Q_\ii}}(V_\ii - \Vnom) - \frac{\tilde{\beta}_\ii r_{V_\ii}}{\tau_{Q_\ii}}Q_\ii + \frac{1}{\tau_{Q_\ii} \tilde{\beta}_{V_\ii}}u_\ii + \frac{1}{\tilde{\beta}_{V_\ii}}\dot{u}_\ii.
\end{align}
Let $V:=[V_1,\dots,V_{n_\mathrm{gfm}}]^\top\in\mathbb{R}^{n_\mathrm{gfm}}$ and $Q^{\mathrm{avg}}(V)=Q^{\mathrm{avg}}:=[Q^{\mathrm{avg}}_1,\dots,Q^{\mathrm{avg}}_{n_\mathrm{gfm}}]^\top\in\mathbb{R}^{n_\mathrm{gfm}}$. We establish an invariance result for the closed-loop voltage dynamics around the nominal operating point $\mathbf{\Vnom}:=\Vnom \mathbf{1}_{n_\mathrm{gfm}}$, where $\mathbf{1}_{n_\mathrm{gfm}}$ is the vector with all entries equal to $1$, and $\Qnom:=Q(\mathbf{\Vnom})$. The result provides an explicit forward-invariant set whose size depends on the network parameters and controller design, revealing how intrinsic system dissipation and control-induced effects jointly govern closed-loop boundedness and performance.

\begin{theorem}\label{thm:centered_invariance}
Consider centered voltage and filtered reactive-power variables $\widetilde{V}(t) = V(t) - \mathbf \Vnom, \widetilde{q}(t) = Q^{\mathrm{avg}}(t) - \Qnom$. Given radii $R_V>0$, $R_q>0$, define the set of centered states $\Omega(R_V,R_q) := \{(\widetilde V,\widetilde q):\|\widetilde V\|_2\le R_V,\;\|\widetilde q\|_2\le R_q\}$. Assume $\tau_{Q_\ii}\ge 1, \tilde\beta_{V_\ii}>0$ for all $\ii$. If $V_\ii>0$ for all $\ii$, then there exist constants $\alpha > 0, c_\star > 0, \nu > 0$ and $\rho_\Omega > 0$, defined explicitly from the system parameters and the radii $(R_V, R_q)$, satisfying 
$c_\star < \alpha \rho_\Omega$ and the sub-level set $\mathcal S_{\rho_\Omega}
:=
\{(\widetilde V,\widetilde q):\Psi(\widetilde V,\widetilde q)\le \rho_\Omega\}$ is forward invariant where, $\Psi(\widetilde V,\widetilde q)
:= \frac16 \widetilde V^\top \text{diag}(1/\tau_{Q_\ii}) \widetilde V
+\frac{\nu}{2}\|\widetilde q\|_2^2$. If $(\widetilde{V}(0), \widetilde{q}(0)) \in \Omega(R_V,R_q)$
then $\|\widetilde V(t)\|_2\le R_V, \mbox{and} \ \|\widetilde q(t)\|_2\le R_q \ \forall t \geq 0.$
\end{theorem}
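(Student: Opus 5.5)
We plan a Lyapunov-type argument on the quadratic function $\Psi$, showing that along trajectories $\dot\Psi \le -\alpha\Psi + c_\star$. From this, standard comparison gives invariance of any sub-level set $\mathcal S_\rho$ with $\rho \ge c_\star/\alpha$. Sandwiching $\Omega(R_V,R_q)$ between $\Psi$-level sets then yields the bound statement.

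\textbf{Step 1: centered dynamics.} Write the centered dynamics. From~\eqref{eq:V_dot_expanded_final},
\begin{align*}
\dot{\widetilde V}_\ii = -\tfrac{1}{\tau_{Q_\ii}}\widetilde V_\ii - \tfrac{\tilde\beta_\ii r_{V_\ii}}{\tau_{Q_\ii}}\big(Q_\ii(V)-\Qnom_\ii\big) - \tfrac{\tilde\beta_\ii r_{V_\ii}}{\tau_{Q_\ii}}\Qnom_\ii + w_\ii ,
\end{align*}
where $w_\ii := \frac{u_\ii}{\tau_{Q_\ii}\tilde\beta_{V_\ii}} + \frac{\dot u_\ii}{\tilde\beta_{V_\ii}}$. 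The input $u_\ii$ is a piecewise-linear interpolation between solutions of~\eqref{eq:gfm_optimization}, which lie in the bounded sets $D_\ii^{V_\Delta}$. Hence $u_\ii$ and $\dot u_\ii$ (slope at most $2\max|x_{\s,\ii}|/\Delta$) are bounded by constants depending on $V_\Delta$, $R_V$, $R_q$ and the parameters. Similarly, $\tau_{Q_\ii}\dot{\widetilde q}_\ii = -\widetilde q_\ii + (Q_\ii(V)-\Qnom_\ii)$.

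\textbf{Step 2: local Lipschitz bound.} From~\eqref{eq:reactivepower}, $Q(V) - Q(\mathbf\Vnom)$ is a quadratic polynomial in $\widetilde V$ with coefficients built from $B$. On the ball $\|\widetilde V\|\le R_V$ (with $V_\ii>0$) we therefore get $\|Q(V)-\Qnom\|\le L_Q\|\widetilde V\|$, where $L_Q = \|B\|(2\Vnom + R_V)$ up to constants.

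\textbf{Step 3: derivative estimate.} Differentiate $\Psi$. The $V$-part gives $\frac13\sum_\ii \frac{1}{\tau_{Q_\ii}}\widetilde V_\ii\dot{\widetilde V}_\ii$. Use $\tau_{Q_\ii}\ge1$ (so $1/\tau_{Q_\ii}^2\le 1/\tau_{Q_\ii}$) and Young's inequality to absorb three cross terms into the dissipation $-\frac13\sum \widetilde V_\ii^2/\tau_{Q_\ii}^2$: the $Q$-coupling term, the constant $\Qnom$ term, and the $w$ term. This explains the factor $\frac16$: one third of the dissipation is split among these terms. Do the same for the $\widetilde q$-part with weight $\nu$. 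Choose $\nu$ small enough, with $\nu \le c/(L_Q^2\max\tau_{Q_\ii})$, that the $\widetilde q$-equation's forcing by $Q(V)-\Qnom$ is dominated by the $V$ dissipation. This yields $\dot\Psi\le -\alpha\Psi + c_\star$, where $\alpha = \min\{\min_\ii 1/\tau_{Q_\ii}^{2}\cdot\text{const},\ \min_\ii 1/\tau_{Q_\ii}\}$ and $c_\star$ collects $\|\Qnom\|^2$, $\bar w^2$ and the parameter constants.

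\textbf{Step 4: choice of $\rho_\Omega$ and conclusion.} Choose $\rho_\Omega$ as the largest level with $\mathcal S_{\rho_\Omega}\subseteq\Omega(R_V,R_q)$, that is
\begin{align*}
\rho_\Omega=\min\Big\{\tfrac{R_V^2}{6\max_\ii\tau_{Q_\ii}},\ \tfrac{\nu R_q^2}{2}\Big\}.
\end{align*}
For this choice, the estimates of Steps 2--3 are valid throughout $\mathcal S_{\rho_\Omega}$. The condition $c_\star<\alpha\rho_\Omega$ is imposed on the radii and parameters, requiring them large relative to the offsets. Then on the boundary $\Psi=\rho_\Omega$ we have $\dot\Psi<0$, so $\mathcal S_{\rho_\Omega}$ is forward invariant by a first-exit-time argument, and trajectories remain inside $\Omega$. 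For the final claim, an initial condition in $\Omega$ must also lie in $\mathcal S_{\rho_\Omega}$. We intend to handle this either by showing $\Psi(0)\le\rho_\Omega$ via the reverse inclusion under the same parameter conditions, or by reading the constants so that $\Omega$ and $\mathcal S_{\rho_\Omega}$ coincide in the relevant directions. This inclusion mismatch is delicate, since $\mathcal S_\rho\subseteq\Omega$ does not reverse in general.

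\textbf{Main obstacle.} The main obstacle is the circular dependence between $L_Q$, the bounds on $u_\ii$ and $\dot u_\ii$, and the radii. We will resolve it by working only on $\Omega$ and invoking invariance through the first-exit-time argument.
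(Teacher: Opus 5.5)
Your architecture matches the paper's: centered dynamics, the local Lipschitz constant $L_Q$, the composite $\Psi$ with small $\nu$, the same $\rho_\Omega$, imposing $c_\star<\alpha\rho_\Omega$, and a first-exit-time argument. But Step~3 breaks at the reactive-power coupling term in $\dot\Phi$. After your Lipschitz bound, that term satisfies
\[
-\tfrac13\widetilde V^\top\boldsymbol\tau\boldsymbol\tau_{\beta r}\bigl(Q(V)-\Qnom\bigr)\le \tfrac13\Bigl(\max_i\tfrac{|\tilde\beta_i|\, r_{V_i}}{\tau_{Q_i}^2}\Bigr)L_Q\,\|\widetilde V\|_2^2 .
\]
This is already quadratic in $\widetilde V$ with a fixed coefficient, so Young's inequality cannot shrink it; the optimal split returns exactly this coefficient. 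The $\Qnom$ and $w$ terms are different, since they really are linear in $\widetilde V$. Absorbing the coupling term into the dissipation $-\frac{1}{3\tau_{\max}^2}\|\widetilde V\|_2^2$ would need a small-gain condition $\tau_{\max}^2\max_i\frac{|\tilde\beta_i| r_{V_i}}{\tau_{Q_i}^2}L_Q<1$. That is not a hypothesis of the theorem, and it is typically violated by the designs used later (e.g.\ $\beta_V=-1+\mu$, $\beta_Q=0$ gives $\tilde\beta_i=1/\mu$). A telltale sign is that your plan never genuinely uses the hypothesis $V_i>0$.

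The missing idea is to exploit the sign structure of the reactive-power map rather than its Lipschitz constant. The paper expands $\mathcal T_Q$ around $(\mathbf{\Vnom},\Qnom)$ and uses $Q(V)=\mathrm{diag}(V)\mathcal B V$ to isolate the purely quadratic part $-\frac13V^\top(\Sigma(V)\mathcal B+\mathcal B^\top\Sigma(V))V$, with $\Sigma(V)=\mathrm{diag}(\tilde\beta_i r_{V_i}V_i/(2\tau_{Q_i}^2))$. It discards this part as nonpositive via Lemma~\ref{lem:psd_proof}, which is where $V_i>0$ enters. What survives, $\frac13(\Qnom)^\top\boldsymbol\tau_{\beta r}\boldsymbol\tau\widetilde V+\frac13\mathbf{\Vnom}^\top\boldsymbol\tau\boldsymbol\tau_{\beta r}Q$, is affine in $\|\widetilde V\|_2$ once $L_Q$ is used. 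Young's inequality with small $\varepsilon_{r,1},\varepsilon_{r,2}$ then costs only the constant $d_r(R_V)$.

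Treat that sign step with care if you adopt it. In the lemma, the substitution $x=\Sigma^{-1/2}y$ yields the similarity $\Sigma^{1/2}\mathcal B\Sigma^{-1/2}$, not a congruence. What is actually needed is $\sum_i\frac{\tilde\beta_i r_{V_i}}{\tau_{Q_i}^2}V_iQ_i(V)\ge0$. For uniform positive weights this follows cleanly from $\sum_iV_iQ_i=\sum_i|B^{\mathrm{sh}}_i|V_i^3+\sum_{\{i,k\}}|B_{ik}|(V_i+V_k)(V_i-V_k)^2$, but it need not hold for arbitrary heterogeneous weights.

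Your worry about the final claim is justified, and your first proposed fix cannot work. Every point with $\|\widetilde V\|_2=R_V$ and $\|\widetilde q\|_2=R_q$ has $\Psi\ge\frac{R_V^2}{6\tau_{\max}}+\frac{\nu R_q^2}{2}>\rho_\Omega$, so $\Omega\not\subseteq\mathcal S_{\rho_\Omega}$. The paper's argument likewise only covers $\Psi(0)\le\rho_\Omega$. It also assumes $\|u(0)-\unom\|_2\le X_\star$, which you need too for the $u,\dot u$ bounds, since $u$ interpolates from $u(t_{\mathrm{s}})$. So what you can actually conclude is invariance from $\Psi(0)\le\rho_\Omega$, or from a smaller initial box contained in $\mathcal S_{\rho_\Omega}$.
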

\begin{proof}
Define $u(t) := [u_1(t), \dots, u_{n_\mathrm{gfm}}(t)]^\top \in \mathbb{R}^{n_\mathrm{gfm}}$, and $\dot{u}(t) := [\dot{u}_1(t), \dots, \dot{u}_{n_\mathrm{gfm}}(t)]^\top \in \mathbb{R}^{n_\mathrm{gfm}}$, $\unom_\ii :=\tilde\beta_{V_\ii}\tilde\beta_\ii r_{V_\ii} \Qnom_\ii$, for all $\ii$, $\unom := [\unom_1, \dots, \unom_{n_{\mathrm{gfm}}}]^\top \in \mathbb{R}^{n_\mathrm{gfm}}$, $d_\ii := |B_{\ii \ii}| + \textstyle \sum_{k\in N_\ii}|B_{\ii \kk}|$ for all $\ii$, and $d_B := \Big(\sum_{\ii=1}^{n_\mathrm{gfm}} d_\ii^2\Big)^{1/2}$. Fix $R_V>0$, $R_q>0$, and define
\begin{align*}
\Omega(R_V,R_q) &:= \{(\widetilde V,\widetilde q):\|\widetilde V\|_2\le R_V,\;\|\widetilde q\|_2\le R_q\}, \\
L_Q &:= d_B \bigl(2\|\Vnom\|_2+R_V\bigr).
\end{align*}
Define
\begin{equation}\label{eq:constants}
\begin{aligned}
B_1 &= \max_\ii |1+\beta_{Q_\ii}|+\sqrt{n_{\mathrm{gfm}}}\max_\ii |\beta_{V_\ii}|, \\
B_2 &= \max_\ii |\beta_{V_\ii}r_{V_\ii}|+\sqrt{n_{\mathrm{gfm}}}\max_\ii |(1+\beta_{Q_\ii})r_{V_\ii}|, \\
B_0 &= \sqrt{n_{\mathrm{gfm}}}V_\Delta + B_2\|\Qnom\|_2 + \|\unom\|_2, \\
X_\star &:=  \textstyle B_0+B_1R_V+B_2R_q, \  
D_\star:=\frac{2X_\star}{\Delta}. 
\end{aligned}
\end{equation}
Let $ \tau_{\min}:=\min_\ii \tau_{Q_\ii}, \tau_{\max}:=\max_\ii \tau_{Q_\ii}, \tilde\beta_{V,\min}:=\min_\ii \tilde \beta_{V_\ii}, \boldsymbol{\tau}_{\beta r} := \text{diag}((\tilde{\beta}_\ii r_{V_\ii})/\tau_{Q_\ii}), \boldsymbol{\tau}_{\beta_V} := \text{diag}(1/(\tilde{\beta}_{V_\ii} \tau_{Q_\ii})), $ $ \boldsymbol{\beta}_V := \text{diag}(1/\tilde{\beta}_{V_\ii}), \boldsymbol{\tau} := \text{diag}(1/\tau_{Q_\ii}),$
and $K_u \le \frac{1}{3\tilde\beta_{V,\min}\tau_{\min}^2}, K_d\le \frac{1}{3\tilde\beta_{V,\min}\tau_{\min}}.$ Choose $\varepsilon_{r,1},\varepsilon_{r,2},\varepsilon_u,\varepsilon_d>0$, and 
$\varepsilon_q\in(0,2/\tau_{\max})$. Define $\varepsilon_r:=\varepsilon_{r,1}+\varepsilon_{r,2}$, and 
\begin{align*}
d_r(R_V) &:= \textstyle  \frac{1}{18\varepsilon_{r,1}}
\|\boldsymbol{\tau}_{\beta r}\boldsymbol{\tau} \Qnom \|_2^2  + 
\frac{L_Q^2}{18\varepsilon_{r,2}}
\|\boldsymbol{\tau}\boldsymbol{\tau}_{\beta r}\mathbf{\Vnom}\|_2^2, \\
 & \hspace{0.2in} \textstyle + \frac{1}{3} |
\mathbf{\Vnom}^\top
\boldsymbol{\tau}\boldsymbol{\tau}_{\beta r}\Qnom|, \\
c_V &:= \textstyle \frac{1}{3\tau_{\max}^2}-\frac{\varepsilon_r+\varepsilon_u+\varepsilon_d}{2}.
\end{align*}
$\varepsilon_{r,1},\varepsilon_{r,2},\varepsilon_u,\varepsilon_d>0$ be chosen such that $c_V > 0$. Further, let $\nu>0, \varepsilon_q > 0$ be such that
$0<\nu< \frac{2\varepsilon_q\tau_{\min}^2\,c_V}{L_Q^2}.$ Let
\begin{align*}
a_V &:= \textstyle c_V-\nu\frac{L_Q^2}{2\varepsilon_q\tau_{\min}^2}, \ a_q := \nu\Bigl(\frac{1}{\tau_{\max}}-\frac{\varepsilon_q}{2}\Bigr), \\
M_\Psi &:= \textstyle \max\left\{\frac{1}{6\tau_{\min}},\frac{\nu}{2}\right\}, \ 
m_\Psi:=\min\left\{\frac{1}{6\tau_{\max}},\frac{\nu}{2}\right\}, \\
\alpha &:= \textstyle \frac{\min\{a_V,a_q\}}{M_\Psi}, \
c_\star := \textstyle d_r(R_V)
+\frac{K_u^2}{2\varepsilon_u}X_\star^2
+\frac{K_d^2}{2\varepsilon_d}D_\star^2,\\
\rho_\Omega &:= \textstyle \min\left\{\frac{R_V^2}{6\tau_{\max}},
\frac{\nu R_q^2}{2}
\right\}.
\end{align*} 
We divide the proof into several steps.

\textbf{Step 1: Centered dynamics.}
Since
\[
V = \widetilde{V} + \mathbf{\Vnom},
\qquad
Q^{\mathrm{avg}}=\widetilde q + \Qnom,
\]
the voltage dynamics~\eqref{eq:V_dot_expanded} can be written as
\begin{align}
\dot{\widetilde V}
&=
-\boldsymbol\tau \widetilde V
-\boldsymbol\tau_{\beta r}\bigl(Q(V) - \Qnom\bigr) \nonumber \\
&\hspace{1in} + \boldsymbol\tau_{\beta_V}(u - \unom)
+\boldsymbol\beta_V \dot u. \label{eq:centered_voltage_dyn_proof}
\end{align}
Also, using $\tau_{Q_\ii}\dot Q_\ii^{\mathrm{avg}} = -Q_\ii^{\mathrm{avg}} + Q_\ii,$
\begin{equation}\label{eq:centered_filter_dyn_proof}
\dot{\widetilde q}
= -\boldsymbol{\tau} \widetilde q + \boldsymbol{\tau}\bigl(Q(V) - \Qnom \bigr).
\end{equation}

\textbf{Step 2: Bound the optimizer $x_\s$.}
Because of consensus constraints $x_{\s,\ii} = x_{\s,\jj}$, every feasible solution has the form $ x_\s = c_{x_\s} \mathbf{1}_{n_\mathrm{gfm}}$,
for some scalar $c_{x_\s} \in \mathbb R$. Therefore, $\|x\|_2 = \sqrt{n_\mathrm{gfm}} |c_{x_\s}|.$
Next, from~\eqref{eq:gfm_optimization},
\[
\alpha_\ii (t_\s)
=
(1+\beta_{Q_\ii})\Vnom - (1+\beta_{Q_\ii})V_\ii - \beta_{V_\ii}r_{V_\ii}Q_\ii^{\mathrm{avg}}(t_\s).
\]
Thus,
\begin{align*}
\|\alpha \|_2
\le &  \max_\ii |1+\beta_{Q_\ii}| \|\mathbf{\Vnom} - V\|_2 +  
\max_\ii |\beta_{V_\ii}r_{V_\ii}| \|Q^{\mathrm{avg}}\|_2.
\end{align*}
Let $L_\alpha^V := \max_\ii |1+\beta_{Q_\ii}|, 
L_\alpha^Q := \textstyle \max_\ii |\beta_{V_\ii}r_{V_\ii}|.$
Then, $\|\alpha\|_2
\le L_\alpha^V \|\mathbf{\Vnom} - V\|_2 + L_\alpha^Q \|Q^{\mathrm{avg}}\|_2.$
Define, $ \overline{\alpha} := \frac{1}{n_\mathrm{gfm}}\sum_{\ii=1}^{n_\mathrm{gfm}} \alpha_\ii$,
and by Cauchy-Schwarz inequality,
\begin{align*}
|\bar \alpha|
= \textstyle
\frac{1}{n_\mathrm{gfm}} |\mathbf{1}_{n_\mathrm{gfm}}^\top \alpha|
& \textstyle \le
\frac{1}{\sqrt{n_\mathrm{gfm}}}\|\alpha\|_2 \\
& \leq \textstyle \frac{L_\alpha^V}{\sqrt n_\mathrm{gfm}}\|\mathbf{\Vnom} - V\|_2
+
\frac{L_\alpha^Q}{\sqrt n_\mathrm{gfm}}\|Q^{\mathrm{avg}}\|_2.
\end{align*}
Now consider sets $D_\ii^{V_\Delta}$ in~\eqref{eq:gfm_optimization}, with the centers given by
\begin{align*}
m_\ii
& =
(1+\beta_{Q_\ii})r_{V_\ii}Q_\ii^{\mathrm{avg}}(t_\s)
- \beta_{V_\ii}(\Vnom - V_\ii (t_\s))
\end{align*}
Therefore,
\begin{align*}
|m_\ii| \le |\beta_{V_\ii}||\Vnom -  V_\ii (t_\s)| + |(1+\beta_{Q_\ii})r_{V_\ii}||Q_\ii^{\mathrm{avg}}(t_\s)|.
\end{align*}
Taking the maximum over $\ii$ yields
\[
\|m\|_\infty
\le m_V\|V(t_\s) - \Vnom\|_2 + m_Q\|Q^{\mathrm{avg}}(t_\s)\|_2,
\]
where
\begin{align*}
m_V := \max_\ii |\beta_{V_\ii}|, \ 
m_Q := \textstyle \max_\ii |(1+\beta_{Q_\ii})r_{V_\ii}|.
\end{align*}
Therefore, the solution $x_\s = c_{x_\s} \mathbf{1}_{n_\mathrm{gfm}}$ is bounded,
\[
|c_{x_\s}|
\le
|\bar\alpha| + \|m\|_\infty + V_\Delta.
\]
Substituting the previously derived bounds gives
\begin{align*}
|c_{x_\s}|
\leq & \textstyle 
\left(\frac{L_\alpha^V}{\sqrt{n_\mathrm{gfm}}} + m_V \right) \|V(t_\s) - \mathbf{\Vnom}\|_2
+ V_\Delta \\
& + \textstyle  \left(\frac{L_\alpha^Q}{\sqrt{n_\mathrm{gfm}}} + m_Q \right) \|Q^{\mathrm{avg}}(t_\s)\|_2
\end{align*}
Multiplying by $\sqrt{n_\mathrm{gfm}}$ and using $\|x_\s \|_2=\sqrt{n_\mathrm{gfm}} |c_{x_\s}|$, we conclude
\begin{align}\label{eq:bound_on_xs}
\|x_\s \|_2
\le B_1\|\tilde{V}(t_\s)\|_2 + B_2\|Q^{\mathrm{avg}}(t_\s)\|_2 + \sqrt{n_\mathrm{gfm}}V_\Delta,
\end{align}
where $B_1,B_2$ are exactly those given in~\eqref{eq:constants}. 
Hence, by the triangle inequality,
\begin{align}
\|x_\s - \unom\|_2
&\le
\|x_\s\|_2+\|\unom\|_2 \nonumber\\
& \le
\sqrt{n_{\mathrm{gfm}}}V_\Delta
+
B_1\|\widetilde V(t_\s)\|_2
+
B_2\|\widetilde q(t_\s)\|_2 \nonumber \\
& \hspace{0.2in} +
B_2\|\Qnom\|_2
+
\|\unom\|_2 \nonumber\\
&=
B_0 + B_1\|\widetilde V(t_\s)\|_2+B_2\|\widetilde q(t_\s)\|_2.
\label{eq:x_minus_unom_bound_proof}
\end{align}
Now assume $(\widetilde V(t_\s),\widetilde q(t_\s))\in \mathcal{S}_{\rho_\Omega}$, then
\[
\|\widetilde V(t_\s)\|_2\le R_V,
\qquad
\|\widetilde q(t_\s)\|_2\le R_q,
\]
and so~\eqref{eq:x_minus_unom_bound_proof} yields
\begin{equation}\label{eq:x_minus_unom_bound_region_proof}
\|x_\s - \unom\|_2
\le
B_0+B_1R_V+B_2R_q
=
X_\star.
\end{equation}

\textbf{Step 3: Explicit bounds on $u - \unom$ and $\dot{u}$.}
For $t\in[t_\s,t_\s + \Delta_\s)$, the interpolation law gives
\begin{align*}
u(t) = u(t_\s).
\end{align*}
And for $t\in[t_\s + \Delta_\s,t_{\s+1})$, the interpolation law gives
\begin{align*}
u(t)
= \textstyle \Bigl(1-\frac{t - t_\s - \Delta_\s}{\Delta}\Bigr)u(t_\s)
+ \frac{t - t_\s - \Delta_\s }{\Delta}x_\s.
\end{align*}
Subtracting $\unom$ for $t \in [t_\s, t_{\s+1})$, we have
$u(t) - \unom$
$= \begin{cases}
u(t_\s) - \unom \\
\big(1-\frac{t-t_\s-\Delta_\s}{\Delta}\big)(u(t_\s) - \unom)
+
\frac{t-t_\s-\Delta_\s}{\Delta}(x_s - \unom).
\end{cases} 
$ 

Since both coefficients are nonnegative and sum to one,
\begin{align}
\|u(t) - \unom\|_2
&\le \textstyle 
\Bigl(1-\frac{t-t_\s-\Delta_\s}{\Delta}\Bigr)\|u(t_s)- \unom\|_2 \nonumber 
\\
& \textstyle \hspace{0.2in}+ \frac{t-t_\s-\Delta_\s}{\Delta}\|x_\s - \unom\|_2 \nonumber\\
& \hspace{-0.4in}\le
\max \bigl\{\|u(t_\s) - \unom\|_2, \|x_\s - \unom\|_2 \bigr\}.\label{eq:u_convex_bound_proof}
\end{align}
Assume $\|u(0) - \unom\|_2\le X_\star$. Then, using~\eqref{eq:x_minus_unom_bound_region_proof}, we prove by induction over the sampling intervals $t \in [t_\s, t_{\s+1})$ that
\begin{equation}\label{eq:u_minus_unom_global_bound_proof}
\|u(t) - \unom\|_2\le X_\star.
\end{equation}
as long as $(\widetilde V(t),\widetilde q(t))\in \mathcal{S}_{\rho_\Omega}$. Indeed, it is true at $t=0$ by assumption. If it holds at $t=t_\s$, then~\eqref{eq:u_convex_bound_proof} and~\eqref{eq:x_minus_unom_bound_region_proof} imply it holds for all $t\in[t_\s,t_{\s+1})$. Also, $\dot u(t)= \begin{cases}
    0 & t\in[t_\s + \Delta_\s)\\
    \frac{x_\s - u(t_\s)}{\Delta} & t\in[t_\s + \Delta_\s,t_{\s+1})
\end{cases}$.

Hence,
\begin{align}
\|\dot u(t)\|_2
& \textstyle \le \frac{1}{\Delta}\bigl(\|x_\s - \unom\|_2+\|u(t_\s) -\unom\|_2\bigr) \nonumber\\
& \textstyle \le \frac{1}{\Delta}(X_\star+X_\star)
= \frac{2X_\star}{\Delta}
= D_\star. \label{eq:udot_bound_proof}
\end{align}
Therefore, on $\Omega(R_V,R_q)$,
\begin{equation}\label{eq:u_udot_explicit_bounds_proof}
\|u(t) - \unom \|_2\le X_\star,
\quad
\|\dot u(t)\|_2 \le D_\star.
\end{equation}

\textbf{Step 4: Local Lipschitz bound on $Q(V) - \Qnom$.}
From the quadratic structure of the reactive-power map,
\[
\|Q(V) - Q(W)\|_2
\le
d_B\bigl(\|V\|_2+\|W\|_2\bigr)\|V-W\|_2.
\]
Apply this with $W = \mathbf{\Vnom}$. Then
\[
\|Q(V) - \Qnom\|_2
\le
d_B\bigl(\|V\|_2+\|\mathbf \Vnom\|_2\bigr)\|V - \Vnom\|_2.
\]
Since $V=\widetilde V+\mathbf \Vnom, \ 
\|V\|_2\le \|\widetilde V\|_2+\|\mathbf \Vnom\|_2,
$
and on $\Omega(R_V,R_q)$ we have $\|\widetilde V\|_2\le R_V$, it follows that
\begin{align*}
\|V\|_2\le R_V+\|\mathbf \Vnom\|_2.
\end{align*}
Hence
\begin{align}
\|Q(V) - \Qnom\|_2
&\le
d_B\bigl(R_V+\|\mathbf \Vnom\|_2+\|\mathbf \Vnom\|_2\bigr)\|\widetilde V\|_2 \nonumber\\
&\hspace{-0.8in} =
d_B\bigl(2\|\mathbf \Vnom\|_2+R_V\bigr)\|\widetilde V\|_2 = L_Q\|\widetilde V\|_2.
\label{eq:LQ_bound_proof}
\end{align}

\textbf{Step 5: Derivative of the voltage Lyapunov function.}
Define
\[
\textstyle \Phi(\widetilde V):=\frac16 \widetilde V^\top \boldsymbol\tau \widetilde V.
\]
Then
\[
\textstyle \dot\Phi
=
\dot{\widetilde{V}}^\top \frac16 \boldsymbol\tau \widetilde V
+
\widetilde V^\top \frac16 \boldsymbol\tau \dot{\widetilde V}
= \frac13 \widetilde V^\top \boldsymbol\tau \dot{\widetilde V},
\]
since $\boldsymbol\tau$ is diagonal and symmetric. Using~\eqref{eq:centered_voltage_dyn_proof},
\begin{align}
\dot\Phi
&= \textstyle -\frac13 \widetilde V^\top \boldsymbol\tau^2 \widetilde V
-\frac13 \widetilde V^\top \boldsymbol\tau\boldsymbol\tau_{\beta r}(Q - \Qnom) \nonumber\\
&\hspace{0.3in} \textstyle 
+\frac13 \widetilde V^\top \boldsymbol\tau\boldsymbol\tau_{\beta_V}(u - \unom)
+\frac13 \widetilde V^\top \boldsymbol\tau\boldsymbol\beta_V\dot u.
\label{eq:Phi_dot_centered_intermediate_proof}
\end{align}
To make the reactive term symmetric, write
\begin{align}
\dot\Phi
&= \textstyle
-\frac13 \widetilde V^\top \boldsymbol\tau^2 \widetilde V
+\mathcal{T}_Q
+\frac13 \widetilde V^\top \boldsymbol\tau\boldsymbol\tau_{\beta_V}(u - \unom) \nonumber \\
& \textstyle \hspace{0.4in} +\frac13 \widetilde V^\top \boldsymbol\tau\boldsymbol\beta_V \dot u,
\label{eq:Phi_dot_centered_with_TQ_proof}
\end{align}
where
\begin{align}
\mathcal{T}_Q
 & \textstyle := -\frac13 \widetilde V^\top \boldsymbol\tau\boldsymbol\tau_{\beta r}(Q - \Qnom) \nonumber \\
 & \hspace{-0.25in} \textstyle = 
-\frac16 \widetilde V^\top \boldsymbol\tau\boldsymbol\tau_{\beta r}(Q-\Qnom)
-\frac16 (Q - \Qnom)^\top \boldsymbol\tau_{\beta r}\boldsymbol\tau \widetilde V. \label{eq:TQ_def_proof}
\end{align}

We now estimate $\mathcal{T}_Q$ using the improved centered decomposition. Since $\widetilde V = V - \mathbf \Vnom$,
we expand~\eqref{eq:TQ_def_proof} as
\begin{align}
\mathcal{T}_Q
&= \textstyle 
-\frac16 (V - \mathbf \Vnom)^\top \boldsymbol\tau\boldsymbol\tau_{\beta r}(Q - \Qnom) \nonumber\\
&\hspace{0.2in}
\textstyle -\frac16 (Q - \Qnom)^\top \boldsymbol\tau_{\beta r}\boldsymbol\tau (V - \mathbf \Vnom) \nonumber\\
&= \textstyle 
-\frac16 V^\top \boldsymbol\tau\boldsymbol\tau_{\beta r}Q
-\frac16 Q^\top \boldsymbol\tau_{\beta r}\boldsymbol\tau V \nonumber\\
&\hspace{0.2in} \textstyle 
+\frac16 V^\top \boldsymbol\tau\boldsymbol\tau_{\beta r} \Qnom
+\frac16 (\Qnom)^\top \boldsymbol\tau_{\beta r}\boldsymbol\tau V \nonumber\\
&\hspace{0.2in} \textstyle 
+\frac16 (\mathbf \Vnom)^\top \boldsymbol\tau\boldsymbol\tau_{\beta r}Q
+\frac16 Q^\top \boldsymbol\tau_{\beta r}\boldsymbol\tau \mathbf \Vnom \nonumber\\
&\hspace{-0.2in} \textstyle 
-\frac16 (\mathbf \Vnom)^\top \boldsymbol\tau\boldsymbol\tau_{\beta r}\Qnom
-\frac16 (\Qnom)^\top \boldsymbol\tau_{\beta r}\boldsymbol\tau \mathbf \Vnom. \label{eq:TQ_expanded_full_proof}
\end{align}
Rearranging terms,
\begin{align}
\mathcal{T}_Q
&= \textstyle 
-\frac16 V^\top \boldsymbol\tau\boldsymbol\tau_{\beta r}Q
-\frac16 Q^\top \boldsymbol\tau_{\beta r}\boldsymbol\tau V +\frac13 (\Qnom)^\top \boldsymbol\tau_{\beta r}\boldsymbol\tau \widetilde V \nonumber\\
& \textstyle \hspace{0.2in}
+\frac13 (\mathbf \Vnom)^\top \boldsymbol\tau\boldsymbol\tau_{\beta r}Q.
\label{eq:TQ_rearranged_proof}
\end{align}
Using $Q(V)=\mathrm{diag}(V)\mathcal B V$,
and defining $\Sigma(V):=\mathrm{diag}\!\left(\frac{\tilde\beta_\ii r_{V_\ii}V_\ii}{2\tau_{Q_\ii}^2}\right)$,
the first two terms in~\eqref{eq:TQ_rearranged_proof} become
\[ \textstyle 
-\frac13 V^\top \bigl(\Sigma(V)\mathcal B+\mathcal B^\top \Sigma(V)\bigr)V.
\]
Note that the matrix $\Sigma(V) \mathcal{B} + \mathcal{B}^\top \Sigma(V)$ is positive semi-definite (see Lemma~\ref{lem:psd_proof}),
and therefore
\[
\textstyle -\frac13 V^\top \bigl(\Sigma(V)\mathcal B+\mathcal B^\top \Sigma(V)\bigr)V \le 0.
\]
Hence,
\begin{align}
\mathcal{T}_Q
& \textstyle \le
\frac13 (\Qnom)^\top \boldsymbol\tau_{\beta r}\boldsymbol\tau \widetilde V
+\frac13 (\mathbf \Vnom)^\top \boldsymbol\tau\boldsymbol\tau_{\beta r}Q. \label{eq:TQ_after_PSD_drop_proof}
\end{align}
We now bound the two remaining terms. For the first term, Young's inequality gives
\begin{align}
\left| \textstyle 
\frac13 (\Qnom)^\top \boldsymbol\tau_{\beta r}\boldsymbol\tau \widetilde V
\right|
& \textstyle \le
\frac{\varepsilon_{r,1}}{2}\|\widetilde V\|_2^2
+
\frac{1}{18\varepsilon_{r,1}}
\|\boldsymbol\tau_{\beta r}\boldsymbol\tau \Qnom\|_2^2.
\label{eq:TQ_term1_bound_proof}
\end{align}
For the second term, using~\eqref{eq:LQ_bound_proof},
\begin{align}
& \left| \textstyle
\frac13 (\mathbf \Vnom)^\top \boldsymbol\tau\boldsymbol\tau_{\beta r}Q
\right| \nonumber \\
& \textstyle \le
\frac13
\|\boldsymbol\tau\boldsymbol\tau_{\beta r}\mathbf \Vnom\|_2
\|Q - \Qnom\|_2 + \left| \textstyle
\frac{1}{3} \mathbf \Vnom^\top \boldsymbol\tau\boldsymbol\tau_{\beta r}\Qnom
\right| \nonumber \\
& \textstyle \le
\frac13
\|\boldsymbol\tau\boldsymbol\tau_{\beta r}\mathbf \Vnom\|_2
L_Q\|\widetilde V\|_2 + \left| \textstyle
\frac{1}{3} \mathbf \Vnom^\top \boldsymbol\tau\boldsymbol\tau_{\beta r}\Qnom
\right| \nonumber\\
& \textstyle \le
\frac{\varepsilon_{r,2}}{2}\|\widetilde V\|_2^2
+
\frac{L_Q^2}{18\varepsilon_{r,2}}
\|\boldsymbol\tau\boldsymbol\tau_{\beta r}\mathbf \Vnom\|_2^2 + \left| \textstyle
\frac{1}{3} \mathbf \Vnom^\top \boldsymbol\tau\boldsymbol\tau_{\beta r}\Qnom
\right|.
\label{eq:TQ_term2_bound_proof}
\end{align}
Combining~\eqref{eq:TQ_after_PSD_drop_proof},~\eqref{eq:TQ_term1_bound_proof}, and~\eqref{eq:TQ_term2_bound_proof}, we get
\begin{align}
\mathcal{T}_Q
& \le \textstyle \frac{\varepsilon_r}{2}\|\widetilde V\|_2^2+d_r(R_V),
\label{eq:TQ_final_bound_proof}
\end{align}
where
\[
\varepsilon_r=\varepsilon_{r,1}+\varepsilon_{r,2}.
\]
Next, the dissipative quadratic term satisfies
\begin{align}
\textstyle -\frac13 \widetilde V^\top \boldsymbol\tau^2 \widetilde V
& \textstyle \le
-\frac{1}{3\tau_{\max}^2}\|\widetilde V\|_2^2,
\label{eq:tau2_dissipation_proof}
\end{align}
because the smallest eigenvalue of $\boldsymbol\tau^2$ is $1/\tau_{\max}^2$.

For the control term involving $u - \unom$, using~\eqref{eq:u_udot_explicit_bounds_proof} and Young's inequality,
\begin{align}
\textstyle \frac13 \widetilde V^\top \boldsymbol\tau\boldsymbol\tau_{\beta_V}(u - \unom)
&\le
K_u\|\widetilde V\|_2\|u - \unom\|_2 \nonumber\\
& \textstyle \le
\frac{\varepsilon_u}{2}\|\widetilde V\|_2^2
+
\frac{K_u^2}{2\varepsilon_u}\|u - \unom\|_2^2 \nonumber\\
& \textstyle \le
\frac{\varepsilon_u}{2}\|\widetilde V\|_2^2
+
\frac{K_u^2}{2\varepsilon_u}X_\star^2.
\label{eq:u_term_bound_proof}
\end{align}
Similarly, for the $\dot u$ term,
\begin{align}
\textstyle \frac13 \widetilde V^\top \boldsymbol\tau\boldsymbol\beta_V\dot u
& \textstyle \le
K_d\|\widetilde V\|_2\|\dot u\|_2 \nonumber\\
& \textstyle \le
\frac{\varepsilon_d}{2}\|\widetilde V\|_2^2
+
\frac{K_d^2}{2\varepsilon_d}\|\dot u\|_2^2 \nonumber\\
&\textstyle \le
\frac{\varepsilon_d}{2}\|\widetilde V\|_2^2
+
\frac{K_d^2}{2\varepsilon_d}D_\star^2.
\label{eq:udot_term_bound_proof}
\end{align}
Substituting~\eqref{eq:TQ_final_bound_proof}, \eqref{eq:tau2_dissipation_proof}, \eqref{eq:u_term_bound_proof}, and \eqref{eq:udot_term_bound_proof} into~\eqref{eq:Phi_dot_centered_with_TQ_proof} yields
\begin{align}
\dot\Phi
&\textstyle \le
-\left(
\frac{1}{3\tau_{\max}^2}
-\frac{\varepsilon_r+\varepsilon_u+\varepsilon_d}{2}
\right)\|\widetilde V\|_2^2
+d_r(R_V) \nonumber\\
&\hspace{0.2in}
\textstyle +
\frac{K_u^2}{2\varepsilon_u}X_\star^2
+
\frac{K_d^2}{2\varepsilon_d}D_\star^2 \nonumber\\
&=
\textstyle -c_V\|\widetilde V\|_2^2
+d_r(R_V)
+
\frac{K_u^2}{2\varepsilon_u}X_\star^2
+
\frac{K_d^2}{2\varepsilon_d}D_\star^2.
\label{eq:Phi_dot_final_proof}
\end{align}

\textbf{Step 6: Derivative of the filter-energy term.}
From~\eqref{eq:centered_filter_dyn_proof},
\begin{align}
\textstyle \frac12\frac{d}{dt}\|\widetilde q\|_2^2
& = 
\textstyle \widetilde q^\top \dot{\widetilde q} =
-\widetilde q^\top \boldsymbol{\tau}\widetilde q
+
\widetilde q^\top \boldsymbol{\tau}(Q-\Qnom).
\label{eq:q_energy_derivative_proof}
\end{align}
Since
\[
\textstyle \boldsymbol{\tau} \succeq \frac{1}{\tau_{\max}}I,
\qquad
\|\boldsymbol{\tau}\|\le \frac{1}{\tau_{\min}},
\]
we obtain
\begin{align}
\textstyle \frac12\frac{d}{dt}\|\widetilde q\|_2^2
& \textstyle \le
-\frac{1}{\tau_{\max}}\|\widetilde q\|_2^2
+
\frac{1}{\tau_{\min}}\|\widetilde q\|_2\|Q - \Qnom\|_2.
\label{eq:q_energy_derivative_bound1_proof}
\end{align}
Using~\eqref{eq:LQ_bound_proof},
\begin{align}
\textstyle \frac12\frac{d}{dt}\|\widetilde q\|_2^2
&\le
\textstyle -\frac{1}{\tau_{\max}}\|\widetilde q\|_2^2
+
\frac{L_Q}{\tau_{\min}}\|\widetilde q\|_2\|\widetilde V\|_2.
\label{eq:q_energy_derivative_bound2_proof}
\end{align}
Applying Young's inequality with parameter $\varepsilon_q$,
\begin{align}
\textstyle \frac{L_Q}{\tau_{\min}}\|\widetilde q\|_2\|\widetilde V\|_2
& \textstyle \le \frac{\varepsilon_q}{2}\|\widetilde q\|_2^2
+
\frac{L_Q^2}{2\varepsilon_q\tau_{\min}^2}\|\widetilde V\|_2^2.
\label{eq:q_young_bound_proof}
\end{align}
Thus,
\begin{align}
\textstyle \frac12\frac{d}{dt}\|\widetilde q\|_2^2
& \textstyle \le
-\Bigl(\frac{1}{\tau_{\max}}-\frac{\varepsilon_q}{2}\Bigr)\|\widetilde q\|_2^2
+
\frac{L_Q^2}{2\varepsilon_q\tau_{\min}^2}\|\widetilde V\|_2^2.
\label{eq:q_energy_final_proof}
\end{align}

\textbf{Step 7: Derivative of the composite Lyapunov function and invariance.}
Define
\[
\Psi(\widetilde V,\widetilde q)
=
\Phi(\widetilde V)+\frac{\nu}{2}\|\widetilde q\|_2^2.
\]
Multiplying~\eqref{eq:q_energy_final_proof} by $\nu$ and adding to~\eqref{eq:Phi_dot_final_proof}, we get
\begin{align}
\dot\Psi
& \textstyle \le
-\left(
c_V-\nu\frac{L_Q^2}{2\varepsilon_q\tau_{\min}^2}
\right)\|\widetilde V\|_2^2 \nonumber\\
&\hspace{0.15in} \textstyle 
-\nu\Bigl(\frac{1}{\tau_{\max}}-\frac{\varepsilon_q}{2}\Bigr)\|\widetilde q\|_2^2
+
c_\star,
\label{eq:Psi_dot_intermediate_proof}
\end{align}
where
\[
\textstyle c_\star=
d_r(R_V)
+\frac{K_u^2}{2\varepsilon_u}X_\star^2
+\frac{K_d^2}{2\varepsilon_d}D_\star^2.
\]
Due to the choice of $\nu$,
\[
\textstyle a_V=
c_V-\nu\frac{L_Q^2}{2\varepsilon_q\tau_{\min}^2}>0.
\]
Also, since $\varepsilon_q<2/\tau_{\max}$,
\[
\textstyle a_q=
\nu\Bigl(\frac{1}{\tau_{\max}}-\frac{\varepsilon_q}{2}\Bigr)>0.
\]
Therefore,
\begin{align}
\dot\Psi
&\le
-a_V\|\widetilde V\|_2^2-a_q\|\widetilde q\|_2^2+c_\star.
\label{eq:Psi_dot_with_aVaQ_proof}
\end{align}
Next, by definition of $M_\Psi$,
\[
\Psi(\widetilde V,\widetilde q)
\le
M_\Psi\bigl(\|\widetilde V\|_2^2+\|\widetilde q\|_2^2\bigr).
\]
Hence,
\[
\textstyle a_V\|\widetilde V\|_2^2+a_q\|\widetilde q\|_2^2
\ge
\frac{\min\{a_V,a_q\}}{M_\Psi}\Psi
=
\alpha \Psi.
\]
Substituting this into~\eqref{eq:Psi_dot_with_aVaQ_proof}, we obtain
\begin{align}
\dot\Psi
&\le
-\alpha\Psi+c_\star.
\label{eq:Psi_dot_scalar_proof}
\end{align}
Since, $c_\star < \alpha \rho_\Omega$, 
whenever $\Psi=\rho_\Omega$,
\[
\dot\Psi\le -\alpha\rho_\Omega+c_\star<0.
\]
The estimates leading to
\eqref{eq:Psi_dot_scalar_proof} are valid whenever
\[
(\widetilde V(t),\widetilde q(t))\in \Omega(R_V,R_q),
\]
and for almost all $t$, since $u(t)$ is piecewise affine. Define the
first exit time
\[
T^\star
:=
\inf\left\{
t\ge 0:
(\widetilde V(t),\widetilde q(t))\notin \Omega(R_V,R_q)
\right\}.
\]
If no such time exists, then $T^\star=\infty$. On the interval $[0,T^\star)$, the trajectory belongs to
$\Omega(R_V,R_q)$. Therefore, all bounds derived above apply, and
\eqref{eq:Psi_dot_scalar_proof} holds for almost all
$t\in[0,T^\star)$:
\[
\dot\Psi(t)\le -\alpha\Psi(t)+c_\star .
\]
By the comparison lemma, for all $t\in[0,T^\star)$,
\[
\Psi(t)
\le
e^{-\alpha t}\Psi(0)
+
\frac{c_\star}{\alpha}\bigl(1-e^{-\alpha t}\bigr).
\]
If $\Psi(0)\le \rho_\Omega$, then using $c_\star<\alpha\rho_\Omega.$
Then, for every $t\in[0,T^\star)$,
\[
\Psi(t)
\le
e^{-\alpha t}\rho_\Omega
+
\frac{c_\star}{\alpha}\bigl(1-e^{-\alpha t}\bigr)
\le
\rho_\Omega .
\]
Thus,
\[
\Psi(t)\le \rho_\Omega,
\qquad
\forall t\in[0,T^\star).
\]
Next, by the definition of $\Psi$,
\[
\Psi(\widetilde V,\widetilde q)
= \textstyle 
\frac{1}{6} \widetilde V^\top \boldsymbol\tau \widetilde V
+
\frac{\nu}{2}\|\widetilde q\|_2^2.
\]
Finally, by the definition of $m_\Psi$,
\[
\textstyle \Psi(\widetilde V,\widetilde q)
\ge
\frac{1}{6\tau_{\max}}\|\widetilde V\|_2^2,
\qquad
\Psi(\widetilde V,\widetilde q)
\ge
\frac{\nu}{2}\|\widetilde q\|_2^2.
\]
Therefore, for all $t\in[0,T^\star)$,
\[
\|\widetilde V(t)\|_2^2
\le
6\tau_{\max}\Psi(t)
\le
6\tau_{\max}\rho_\Omega
\le
R_V^2,
\]
and
\[
\textstyle \|\widetilde q(t)\|_2^2
\le
\frac{2}{\nu}\Psi(t)
\le
\frac{2}{\nu}\rho_\Omega
\le
R_q^2.
\]
Hence,
\[
(\widetilde V(t),\widetilde q(t))\in \Omega(R_V,R_q),
\qquad
\forall t\in[0,T^\star).
\]
Suppose, for contradiction, that $T^\star<\infty$. Since the state trajectory is continuous, taking the limit $t \to T^\star$ gives
\[
\|\widetilde V(T^\star)\|_2\le R_V,
\qquad
\|\widetilde q(T^\star)\|_2\le R_q.
\]
Thus
\[
(\widetilde V(T^\star),\widetilde q(T^\star))
\in \Omega(R_V,R_q),
\]
which contradicts the definition of $T^\star$ as the first exit time
from $\Omega(R_V,R_q)$. Therefore, $T^\star=\infty$. Consequently,
\[
\Psi(t)\le \rho_\Omega,
\qquad
\forall t\ge 0,
\]
and hence
\[
\|\widetilde V(t)\|_2\le R_V,
\qquad
\|\widetilde q(t)\|_2\le R_q,
\qquad
\forall t\ge 0.
\]
Therefore, $\Omega(R_V,R_q)$ is forward invariant for all trajectories
starting in the sublevel set
\[
\mathcal S_{\rho_\Omega}
:=
\{(\widetilde V,\widetilde q):\Psi(\widetilde V,\widetilde q)\le \rho_\Omega\}.
\]
Moreover,
\[
\mathcal S_{\rho_\Omega}\subseteq \Omega(R_V,R_q),
\]
and $\mathcal S_{\rho_\Omega}$ is forward invariant.
This completes the proof.
\end{proof}

We next establish the existence of a steady state within the invariant region. Since the invariance confines trajectories to a compact set, fixed-point arguments can be used to characterize equilibrium behavior. We show that the closed-loop voltage dynamics admit a steady-state operating point consistent with the network power flow and controller structure, and that this equilibrium lies within the certified invariant region.

\begin{theorem}\label{thm:steady_state_existence}
Assume the hypotheses of Theorem~\ref{thm:centered_invariance} hold, and let $R_V<\Vnom$ be the radius therein. Let $\mathcal{K}_V
:=
\prod_{\ii=1}^{n_{\mathrm{gfm}}}[\Vnom-R_V, \ \Vnom+R_V]
\subset \mathbb R^{n_{\mathrm{gfm}}}$. Define the secondary-level voltage control map~\eqref{eq:voltage_reference} component-wise, for all $\ii=1,\dots,n_{\mathrm{gfm}}$, as $T_\ii(V)
:= \textstyle 
\Vnom
+ 
\frac{1}{(1+\beta_{V_\ii})}x_{\s,\ii}(V,Q(V)) - \frac{r_{V_\ii}(1 + \beta_{Q_\ii})}{(1+\beta_{V_\ii})}Q_\ii(V).$ 
Then,
i) $T$ is continuous on $\mathcal K_V$, and $T(\mathcal K_V)\subseteq \mathcal K_V$, ii) there exists $V^\star\in\mathcal K_V$ such that $T(V^\star)=V^\star$. Let $Q^\star := Q(V^\star),
\ (Q^{\mathrm{avg}})^\star:=Q^\star,$ and $u^\star:= x_\s(V^\star,Q^\star)$,
then the signals $V(t)\equiv V^\star,\
Q(t)\equiv Q^\star,\
Q^{\mathrm{avg}}(t)\equiv Q^\star,\
u(t)\equiv u^\star $ 
form a positive steady state of the voltage/filter/controller subsystem. 
\end{theorem}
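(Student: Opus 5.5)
The plan is to apply Brouwer's fixed-point theorem to $T$ on the compact convex box $\mathcal K_V$, and then verify directly that a fixed point generates a constant solution of the closed-loop equations. The steps are: (a) continuity of $T$; (b) self-mapping $T(\mathcal K_V)\subseteq\mathcal K_V$; (c) existence of $V^\star$ via Brouwer; (d) the steady-state verification.

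\textbf{Continuity.} The reactive-power map $Q(V)=\mathrm{diag}(V)\mathcal B V$ is a polynomial, hence continuous, and on $\mathcal K_V$ it satisfies the Lipschitz estimate~\eqref{eq:LQ_bound_proof}. For the optimizer I would use that $x_\s$ is the Euclidean projection of the data vector $\alpha$ onto the closed convex set
\[
\mathcal C=\{c\mathbf 1_{n_{\mathrm{gfm}}}: c\in\textstyle\bigcap_\ii D_\ii^{V_\Delta}\}.
\]
Equivalently, $c_{x_\s}$ is the clamp of $\bar\alpha$ onto the interval $[\max_\ii(m_\ii-V_\Delta),\,\min_\ii(m_\ii+V_\Delta)]$. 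The endpoints $m_\ii\pm V_\Delta$ and the average $\bar\alpha$ are affine in $(V,Q)$, and $\max$, $\min$ and clamping are continuous. So $x_\s(V,Q(V))$ is continuous wherever the interval is nonempty. Nonemptiness, meaning feasibility of~\eqref{eq:gfm_optimization}, is implicit in the hypotheses of Theorem~\ref{thm:centered_invariance}, which treats $x_\s$ as well defined. I would state it as an assumption on $\mathcal K_V$ if needed, for example by requiring $V_\Delta$ large relative to the spread of the $m_\ii$.

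\textbf{Self-mapping.} This is the main obstacle. I would write
\[
T(V)-\mathbf{\Vnom}=\boldsymbol\beta_V\bigl(x_\s-\unom\bigr)-\mathrm{diag}(\tilde\beta_\ii r_{V_\ii})\bigl(Q(V)-\Qnom\bigr),
\]
using $\unom_\ii=\tilde\beta_{V_\ii}\tilde\beta_\ii r_{V_\ii}\Qnom_\ii$. I would then bound each component, using $\|\cdot\|_\infty\le\|\cdot\|_2$, by
\[
\|T(V)-\mathbf{\Vnom}\|_\infty\le \frac{X_\star}{\tilde\beta_{V,\min}}+\max_\ii|\tilde\beta_\ii r_{V_\ii}|\,L_Q\|\widetilde V\|_2 .
\]
Here $X_\star$ comes from~\eqref{eq:x_minus_unom_bound_proof}, with $\widetilde q$ replaced by $Q(V)-\Qnom$, whose norm is at most $L_Q\|\widetilde V\|_2$. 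One difficulty is that $\mathcal K_V$ is an $\infty$-ball, so $\|\widetilde V\|_2\le\sqrt{n_{\mathrm{gfm}}}R_V$ rather than $R_V$. The constants must therefore be evaluated with $\sqrt{n_{\mathrm{gfm}}}R_V$, or the argument recast on the Euclidean ball $\{\|\widetilde V\|_2\le R_V\}$, which is also compact and convex, so Brouwer still applies. I would try the second route first, since it matches $\Omega(R_V,R_q)$. The real content is then a smallness condition: the right-hand side must be at most $R_V$. I would argue it follows from the same parameter regime that yields $c_\star<\alpha\rho_\Omega$. Both inequalities require the control-induced gains $X_\star/\tilde\beta_{V,\min}$ and $\tilde\beta_\ii r_{V_\ii}L_Q$ to be dominated by the dissipation encoded in $R_V$. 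If this cannot be extracted cleanly, I would state it as an explicit additional inequality on $(\beta_{V_\ii},\beta_{Q_\ii},r_{V_\ii},V_\Delta)$ under which (i) holds. Positivity of every component follows from $R_V<\Vnom$, since each $V_\ii\ge\Vnom-R_V>0$.

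\textbf{Fixed point and steady state.} Brouwer's theorem gives $V^\star$ with $T(V^\star)=V^\star$. Set $Q^\star=Q(V^\star)$, $(Q^{\mathrm{avg}})^\star=Q^\star$ and $u^\star=x_\s(V^\star,Q^\star)$, and check the equations along the constant signals. The filter equation $\tau_{Q_\ii}\dot Q^{\mathrm{avg}}_\ii=Q_\ii-Q^{\mathrm{avg}}_\ii$ vanishes. The interpolation law keeps $u\equiv u^\star$, because each resampled optimizer equals $u(t_\s)=u^\star$, so $\dot u=0$. Substituting $\dot u=0$ and $T(V^\star)=V^\star$ into~\eqref{eq:V_dot_expanded_final} makes the right-hand side zero, since the fixed-point identity is exactly~\eqref{eq:voltage_reference} at equilibrium. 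Hence these signals form a positive steady state lying in $\mathcal K_V$.
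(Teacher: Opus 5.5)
Your plan is essentially the paper's proof: Brouwer on the box $\mathcal K_V$, continuity of $x_\s$ via the clamp of $\bar\alpha$ onto $[\max_\ii(m_\ii-V_\Delta),\min_\ii(m_\ii+V_\Delta)]$ (Lemma~\ref{lem:continuous_sol}, valid only where that interval is nonempty, which the paper's proof also leaves unchecked, so you are right to flag it), a self-map estimate, and a direct check that the constant signals built from $V^\star$ satisfy the filter, interpolation and voltage equations. On the self-map, the paper does not derive it from Theorem~\ref{thm:centered_invariance} or from $c_\star<\alpha\rho_\Omega$; it imposes the explicit inequality~\eqref{eq:selfmap_condition}, built from the uncentered bounds $|Q_\ii(V)|\le M_Q$ and $\|x_\s\|_2\le M_u$ with $\|\widetilde V\|_2\le\sqrt{n_{\mathrm{gfm}}}R_V$ on the box, so your fallback of adding such an inequality is the actual argument, and you should keep the box rather than switch to the Euclidean ball if you want (i) exactly as stated.
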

\begin{proof}
Define $\mathcal K_V
:=
\prod_{i=1}^{n_{\mathrm{gfm}}}[\Vnom-R_V, \ \Vnom+R_V]
\subset \mathbb R^{n_{\mathrm{gfm}}}$, $d_{\max} := \max_\ii \textstyle  \left(|B_{\ii\ii}|+\sum_{k\in N_\ii}|B_{\ii\kk}|\right), M_Q := d_{\max}(\Vnom $ $ + R_V)^2, M_u := B_1\sqrt{n_{\mathrm{gfm}}} R_{V} + B_2\sqrt{n_{\mathrm{gfm}}}M_Q + \sqrt{n_\mathrm{gfm}}V_\Delta$, where, $B_1,B_2$ are as in~\eqref{eq:constants}
$\tilde\beta_{Q,\max}:=\max_\ii |\tilde\beta_{Q_\ii}|, r_{V,\max}:=\max_\ii |r_{V_\ii}|$. Let
\begin{equation}\label{eq:selfmap_condition}
\frac{M_u+\tilde\beta_{Q,\max}r_{V,\max}M_Q}{\tilde\beta_{V,\min}}
\le R_V.
\end{equation}

We proceed in several steps.

\textbf{Step 1: $\mathcal K_V$ is a nonempty compact convex positive set.}
Since $R_{V}>0$, the set $\mathcal K_V$ is nonempty. Because it is a product of closed bounded intervals, it is compact; because it is a box, it is convex. Since $\Vnom > R_{V} >0$. 
Hence, every $V\in\mathcal K_V$ is componentwise positive:
\[
V_\ii\in [\Vnom-R_{V},\Vnom+R_{V}]
\subset (0,\infty),
\quad \ii=1,\dots,n_{\mathrm{gfm}}.
\]

\textbf{Step 2: Map $T$ is continuous.}
By definition~\eqref{eq:reactivepower}, $Q(V)$ is continuous for any $V$ and the solution $x_\s$ of~\eqref{eq:gfm_optimization} is continuous in $(V,Q(V))$ (see Lemma~\ref{lem:continuous_sol}). Therefore, the composition
\[
V \mapsto x_\s\bigl(V,Q(V)\bigr)
\]
is continuous. Since each component $T_\ii$ is obtained from continuous operations, $T$ is continuous.

\textbf{Step 3: Uniform bounds for $Q(V)$ on $\mathcal K_V$.}
Let $V\in \mathcal K_V$. Since each component satisfies
\[
|V_\ii|\le \Vnom+R_{V},
\]
we have
\[
\|V\|_\infty\le \Vnom+R_{V}.
\]
For each $\ii$, using the reactive-power relation~\eqref{eq:reactivepower}, we obtain
\begin{align*}
    |Q_\ii(V)|
    &\le \textstyle 
    \left(|B_{\ii\ii}|+\sum_{\kk\in N_\ii}|B_{\ii\kk}|\right)\|V\|_\infty^2 \\
    &\le d_{\max}(\Vnom+R_{V})^2 = M_Q.
\end{align*}
Hence
\begin{equation}\label{eq:Qi_bound_proof}
|Q_\ii(V)|\le M_Q,
\quad \ii=1,\dots,n_{\mathrm{gfm}}.
\end{equation}
which yields $\|Q(V)\|_2
\le
\sqrt{n_{\mathrm{gfm}}}M_Q.$
Using~\eqref{eq:bound_on_xs}, we have
\begin{align*}\label{eq:bound_on_xs}
\|x_\s \|_2
&\le B_1\|\tilde{V}\|_2 + B_2\|Q^{\mathrm{avg}}\|_2 + \sqrt{n_\mathrm{gfm}}V_\Delta \\
& \hspace{-0.1in} \le \left(B_1\sqrt{n_{\mathrm{gfm}}} R_{V} + B_2\sqrt{n_{\mathrm{gfm}}}M_Q + \sqrt{n_\mathrm{gfm}}V_\Delta \right):= M_u.
\end{align*}


\textbf{Step 4: $T(\mathcal K_V)\subseteq \mathcal K_V$.}
Fix $V\in\mathcal K_V$. For each $\ii$, using~\eqref{eq:selfmap_condition}, and~\eqref{eq:Qi_bound_proof}, we get
\begin{align*}
|T_\ii(V)-\Vnom|
&= \textstyle \left|
\frac{x_{\s,\ii}(V,Q(V))-\tilde\beta_{Q_\ii}r_{V_\ii}Q_\ii(V)}{\tilde\beta_{V_\ii}}
\right| \\
& \textstyle \le
\frac{
|x_{\s,\ii}(V,Q(V))|
+
|\tilde\beta_{Q_\ii}|\,
|r_{V_\ii}|\,
|Q_\ii(V)|
}{\tilde\beta_{V_\ii}} \\
&\textstyle \le
\frac{M_u+\tilde\beta_{Q,\max}r_{V,\max}M_Q}{\tilde\beta_{V,\min}}
\le
R_{V},
\end{align*}
where the last inequality is exactly~\eqref{eq:selfmap_condition}. Therefore
\[
\Vnom-R_{V}\le T_\ii(V)\le \Vnom+R_{V},
\quad \ii=1,\dots,n_{\mathrm{gfm}},
\]
which shows that $T(V)\in \mathcal K_V$. Since $V\in\mathcal K_V$ was arbitrary, we conclude
\[
T(\mathcal K_V)\subseteq \mathcal K_V.
\]

\textbf{Step 5: Existence of a fixed point.}
The set $\mathcal K_V$ is nonempty, compact, and convex, and since
$T:\mathcal K_V\to\mathcal K_V$ is continuous, Brouwer's fixed-point theorem~\cite{brouwer1911abbildung} implies that there exists $V^\star\in\mathcal K_V$ such that
$ V^\star\in \mathcal K_V $ such that $ T(V^\star)=V^\star$.

\textbf{Step 6: Construction of the steady state.}
Define
\begin{align*}
Q^\star:=Q(V^\star),
\quad
(Q^{\mathrm{avg}})^\star:=Q^\star, \quad
u^\star:=x_\s(V^\star,Q^\star).
\end{align*}
We claim that the constant signals
\begin{align*}
V(t) \equiv V^\star, \
Q(t)\equiv Q^\star, \ Q^{\mathrm{avg}}(t) \equiv Q^\star, \ u(t)\equiv u^\star,
\end{align*}
solve the voltage/filter/controller subsystem.

First, since $u(t) = u^\star = x_\s(V^\star, Q^\star)  = c^\star \mathbf 1_{\mathrm{gfm}}$, for some constant $c^\star$, for all $t$, we have
\begin{align*}
\dot u_\ii(t)=0, \quad \ii=1,\dots,n_{\mathrm{gfm}}.
\end{align*}

Second, since $(Q^{\mathrm{avg}})^\star=Q^\star$, the averaging filter
\[
\tau_{Q_\ii}\dot{(Q^{\mathrm{avg}})}^\star = Q^\star_\ii - (Q^{\mathrm{avg}})^\star
\]
yields
\[
\dot{(Q^{\mathrm{avg}})}_\ii^\star(t)=0,
\quad \ii=1,\dots,n_{\mathrm{gfm}}.
\]

It remains to verify the voltage equation. The fixed-point identity $T(V^\star)=V^\star$ means that, for each $\ii$,
\begin{align*}
 V_\ii^\star
= \Vnom +
\frac{u_\ii^\star - \tilde\beta_{Q_\ii}r_{V_\ii}Q_\ii^\star}{\tilde\beta_{V_\ii}}.
\end{align*}
Equivalently,
\[
u_\ii^\star
=
\tilde\beta_{V_\ii}(V_\ii^\star - \Vnom)+\tilde\beta_{Q_\ii}r_{V_\ii}Q_\ii^\star.
\]
Using the relation $\tilde\beta_\ii=\tilde\beta_{Q_\ii}/\tilde\beta_{V_\ii}$, we may rewrite this as
\[ 0
= -\frac{1}{\tau_{Q_\ii}}(V_\ii^\star-\Vnom)
-\frac{\tilde\beta_\ii r_{V_\ii}}{\tau_{Q_\ii}}Q_\ii^\star
+\frac{1}{\tau_{Q_\ii}\tilde\beta_{V_\ii}}u_\ii^\star.
\]
Since $\dot u_\ii=0$, it implies $\dot{V}_\ii = 0, \ii=1,\dots,n_{\mathrm{gfm}}$ for all $\ii$. Therefore, the voltage equation~\eqref{eq:V_dot_expanded_final} achieves a steady-state.

\textbf{Step 7: Positivity.}
Since $V^\star\in \mathcal K_V$, we have
\[
V_\ii^\star\ge \Vnom-R_{V}>0,
\qquad \ii=1,\dots,n_{\mathrm{gfm}}.
\]
Thus, the steady state voltage is positive.

Hence, the constructed constant signals form a positive steady state of the voltage/filter/controller subsystem. 
\end{proof}

Having established the existence of a steady-state operating point, we now characterize its relation to the secondary-level control objectives. In particular, we show that the equilibrium induced by the proposed distributed secondary-level controller enforces the desired coordination among GFM-IIDGs, achieving voltage regulation and equal per-unitized reactive power sharing.

\begin{theorem}\label{thm:steady_state}
Voltage dynamics~\eqref{eq:V_dyan} under the secondary-level control~\eqref{eq:U_i_design_law} achieve equal per-unitized reactive power sharing and voltage regulation among GFM-IIDGs at steady state.
\end{theorem}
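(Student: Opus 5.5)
We will work at the steady state $(V^\star,Q^\star,u^\star)$ given by Theorem~\ref{thm:steady_state_existence}, where $\dot u\equiv 0$, $Q^{\mathrm{avg}}\equiv Q^\star$, and $u^\star=x_\s(V^\star,Q^\star)=c^\star\mathbf 1_{n_{\mathrm{gfm}}}$ because of the consensus constraint in~\eqref{eq:gfm_optimization}. Setting $\dot V_\ii=0$ in~\eqref{eq:V_dot_expanded_final} (equivalently, using the fixed-point identity from Step~6 of that proof), each GFM-IIDG satisfies
\begin{align*}
\tilde\beta_{V_\ii}(V_\ii^\star-\Vnom)+\tilde\beta_{Q_\ii}r_{V_\ii}Q_\ii^\star = c^\star,\qquad \ii=1,\dots,n_{\mathrm{gfm}}.
\end{align*}
The left-hand side is therefore the same for all units. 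The rest of the argument extracts the two objectives from this identity together with the optimality of $c^\star$.

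Next we use the optimization problem~\eqref{eq:gfm_optimization} evaluated at the steady state. There $\alpha_\ii=(1+\beta_{Q_\ii})(\Vnom-V_\ii^\star)-\beta_{V_\ii}r_{V_\ii}Q_\ii^\star$. We form the combination $\alpha_\ii+c^\star$ and substitute the identity above. This gives
\begin{align*}
\alpha_\ii+c^\star=\bigl[(1+\beta_{V_\ii})-(1+\beta_{Q_\ii})\bigr](V_\ii^\star-\Vnom)+\bigl[(1+\beta_{Q_\ii})-\beta_{V_\ii}\bigr]r_{V_\ii}Q_\ii^\star,
\end{align*}
so both deviations enter linearly. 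We then write the KKT conditions of~\eqref{eq:gfm_optimization}. When the box constraints $D_\ii^{V_\Delta}$ are inactive (we argue this by choosing $V_\Delta$ large enough, or by noting that the center $m_\ii$ of $D_\ii^{V_\Delta}$ equals $c^\star$ at the fixed point, since $m_\ii$ is exactly the steady-state expression with $(\beta_{V_\ii},\beta_{Q_\ii})$ roles matching), the optimizer is the average, $c^\star=\bar\alpha$. This yields one scalar relation coupling the averages of $V_\ii^\star-\Vnom$ and $r_{V_\ii}Q_\ii^\star$.

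The key observation is that the center $m_\ii$ of $D_\ii^{V_\Delta}$ is precisely $\tilde\beta_{Q_\ii}r_{V_\ii}Q_\ii+\tilde\beta_{V_\ii}(V_\ii-\Vnom)$ up to the $\beta$ shift. So feasibility at the fixed point, $|c^\star-m_\ii|\le V_\Delta$, together with $c^\star=\bar\alpha$, gives two independent linear equations per unit. Solving the resulting $2\times2$ system shows that $V_\ii^\star=\Vnom$ (voltage regulation) and $r_{V_\ii}Q_\ii^\star=r_{V_\jj}Q_\jj^\star$ for all $\ii,\jj$. Since the droop gains are chosen inversely proportional to ratings, the latter is equal per-unitized reactive power sharing. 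We will verify that the $2\times2$ determinant, essentially $\tilde\beta_{V_\ii}-\tilde\beta_{Q_\ii}+\ldots$, is nonzero under the standing assumptions $\tilde\beta_{V_\ii}>0$.

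The main obstacle is this last algebraic step. The identity from the steady state alone gives only one equation per unit, which fixes a weighted combination of the voltage deviation and $r_{V_\ii}Q_\ii^\star$ but not each separately. The second equation must come from the structure of~\eqref{eq:gfm_optimization}, and this requires care with the box constraints and with the exact sign conventions of $\alpha_\ii$ and $m_\ii$. If a clean separation fails, we would instead state the result as follows. Exact regulation is achieved when $\beta_{V_\ii}=\beta_{Q_\ii}=\beta$ uniform, and sharing follows from the consensus value $c^\star$, since then $V_\ii^\star=\Vnom$ forces $\tilde\beta r_{V_\ii}Q_\ii^\star=c^\star$ for all $\ii$. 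Finally, we will note that Theorem~\ref{thm:centered_invariance} guarantees this steady state lies in the invariant region.
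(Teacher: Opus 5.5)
Your starting identity is correct and is exactly the paper's~\eqref{eq:steady_state}: at the equilibrium, $\tilde\beta_{V_\ii}(V_\ii^\star-\Vnom)+\tilde\beta_{Q_\ii}r_{V_\ii}Q_\ii^\star=c^\star$ for every $\ii$. The gap is in the step that is supposed to give a second equation per unit.

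The center of $D_\ii^{V_\Delta}$ is $m_\ii=\tilde\beta_{Q_\ii}r_{V_\ii}Q_\ii^\star+\beta_{V_\ii}(V_\ii^\star-\Vnom)$. Subtracting it from the identity gives exactly $c^\star-m_\ii=V_\ii^\star-\Vnom$. Two consequences follow:
\begin{itemize}
\item Your claim that $m_\ii=c^\star$ at the fixed point is just the conclusion $V_\ii^\star=\Vnom$ restated, so the argument is circular.
\item The feasibility condition $|c^\star-m_\ii|\le V_\Delta$ is an inequality. It yields only $|V_\ii^\star-\Vnom|\le V_\Delta$, which is all the paper extracts from the box constraint (up to solver error).
\end{itemize}
The optimality relation $c^\star=\bar\alpha$ is a single scalar equation for the whole network, not one per unit. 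So there is no $2\times 2$ system to solve.

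In fact the target is generically false. If $V^\star=\Vnom\mathbf 1_{n_\mathrm{gfm}}$, then $Q^\star=\Qnom$ is fixed by the network; in this model $\Qnom_\ii=-B^{\mathrm{sh}}_\ii\Vnom^2$. The identity then forces $\tilde\beta_{Q_\ii}r_{V_\ii}\Qnom_\ii$ to be independent of $\ii$. That is a condition on network data, which no secondary controller can enforce. Exact voltage regulation and exact sharing cannot in general hold simultaneously.

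Your fallback has the same flaw. With $\beta_{V_\ii}=\beta_{Q_\ii}=\beta$ the identity reads $(1+\beta)[(V_\ii^\star-\Vnom)+r_{V_\ii}Q_\ii^\star]=c^\star$. This equalizes only the sum, and $V_\ii^\star=\Vnom$ is assumed there rather than derived.

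The missing idea is the paper's: the two objectives are obtained separately and approximately. You choose $(\beta_V,\beta_Q)$ so that one coefficient in the identity is negligible, then use consensus of the $u_\ii^{\sst}$. The paper has that consensus only up to the distributed-solver tolerance, $|u_\ii^{\sst}-u_\jj^{\sst}|\le\varepsilon$, not as an exact $c^\star\mathbf 1_{n_\mathrm{gfm}}$.
\begin{itemize}
\item With $\beta_V=-1+\mu$ and $\beta_Q=0$, you get $r_{V_\ii}Q_\ii^{\sst}=u_\ii^{\sst}-\mu(V_\ii^{\sst}-\Vnom)\approx u_\ii^{\sst}$. Consensus of the $u_\ii^{\sst}$ then gives equal per-unitized reactive power sharing.
\item With $\beta_V=0$ and $\beta_Q=-1+\mu$, you get $V_\ii^{\sst}-\Vnom=u_\ii^{\sst}-\mu r_{V_\ii}Q_\ii^{\sst}\approx u_\ii^{\sst}$, hence equal voltages. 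Moreover $\alpha_\ii=\mu(\Vnom-V_\ii^{\sst})$ and $m_\ii=\mu r_{V_\ii}Q_\ii^{\sst}$ are both $O(\mu)$. So the optimizer satisfies $u^{\sst}\approx 0$, and therefore $V_\ii^{\sst}\approx\Vnom$.
\end{itemize}
The theorem has to be read in this sense: either objective, by an appropriate design choice. Your argument should be rebuilt around that reading.
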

\begin{proof}
By Theorem~\ref{thm:steady_state_existence}, the closed-loop voltage dynamics admit a steady-state operating point. At this equilibrium, the measurements entering~\eqref{eq:gfm_optimization} are constant, so $\alpha_\ii(t_\s)$ and $D_\ii^{V_\Delta}(t_\s)$ are fixed at all steady-state sampling instants. Hence, problem~\eqref{eq:gfm_optimization} has the same optimizer, denoted $x^\sst$, at every such instant. By the sampled-data update law, after one update beyond steady state, $u_\ii(t_\s)=x_\ii^\sst$ for all sufficiently large $t_\s$ and all $\ii\in{1,\dots,n_{\mathrm{gfm}}}$. Consequently, $x_\ii^\sst-u_\ii(t_\s)=0$, and the interpolation law gives $\dot u_\ii(t)=0$ for all sufficiently large $t$. Since $U_\ii^\star$ differs from $u_\ii$ only by steady-state constant terms, $\dot U_\ii^\star=0$ for all $\ii\in{1,\dots,n_{\mathrm{gfm}}}$ at steady state. Setting $\dot{V}_\ii = 0$ in~\eqref{eq:V_dyan} and using  $\dot U_\ii^\star = 0, Q_\ii^{\mathrm{avg},\sst} = Q_\ii^{\sst}$ for all $\ii$, gives
\begin{align}\label{eq:Vi_steady_state}
\hspace{-0.08in} (V_\ii^{\sst} - \Vnom) =  u_\ii^{\sst} + \beta_{V_\ii} (\Vnom - V_\ii^{\sst}) - \tilde{\beta}_{Q_\ii} r_{V_\ii}Q_\ii^{\sst}.
\end{align}
Let $x^\sst$ be $\admm$ solution with accuracy $\varepsilon/4$. Then, 
\begin{align*}
&|V_\ii^{\sst} - \Vnom| = \big|u_\ii^{\sst} + \beta_{V_\ii} (\Vnom - V_\ii^{\sst})- \tilde{\beta}_{Q_\ii} r_{V_\ii}Q_\ii^{\sst}\big| \\
& \leq |x^\sst + \beta_{V_\ii} (\Vnom - V_\ii^{\sst}) - \tilde{\beta}_{Q_\ii} r_{V_\ii}Q_\ii^{\sst}| + |u_\ii^{\sst} - x^\sst| \\
& \leq V_\Delta + \textstyle \sqrt{\Upsilon (0.75)^{\theta_{\varepsilon}}} \leq V_\Delta + \varepsilon/2,
\end{align*}
where, we used~\eqref{eq:admm_iter_comp}. From~\eqref{eq:Vi_steady_state} for all $\ii$,
\begin{align}\label{eq:steady_state}
(1 + \beta_{V_\ii})(V_\ii^{\sst} - \Vnom) + (1 + \beta_{Q_\ii}) r_{V_\ii}Q_\ii^{\sst} = u_\ii^{\sst}. 
\end{align}
Since the steady-state inputs $u_\ii^\sst$ are obtained from the $\admm$ solution $x_\ii^\sst$~\cite{ADMM_tac}, the agreement error satisfies
$|u_\ii^{\sst} - u_\jj^{\sst}| \leq \textstyle \sqrt{\Upsilon (0.75)^{\theta_{\varepsilon}}} \leq \varepsilon$ for all $\ii,\jj$. Thus, with suitable design parameters, the secondary-level inputs~\eqref{eq:U_i_design_law} achieve the objectives of tasks $\mathcal{T}_1$ and $\mathcal{T}_2$ at steady state. In particular, 
\begin{itemize}
    \item [(i)] Let $\beta_{V_\ii} = \beta_{V} = - 1 + \mu$, $0< \mu \ll 0.01$, $\beta_{Q_\ii} = \beta_{Q} = 0$, for all GFM-IIDGs $\ii$. Then from~\eqref{eq:steady_state},  $|r_{V_\ii} Q_\ii^{\sst} - r_{V_\jj} Q_\jj^{\sst}| \approx |u_\ii^{\sst} - u_\jj^{\sst}| \leq \varepsilon \ \forall \ii,\jj$. Hence, equal per-unitized reactive power sharing is achieved among all GFM-IIDGs. 
    \item [(ii)] Let $\beta_{V_\ii} = \beta_{V} = 0, \beta_{Q_\ii} = \beta_{Q} = - 1 + \mu$, $0< \mu \ll 0.01$, for all GFM-IIDGs $\ii$. Then from~\eqref{eq:steady_state}, $|V_\ii^{\sst}  - V_\jj^{\sst}| = | (\Vnom + u_\ii^{\sst}) - (\Vnom + u_\jj^{\sst}) | \approx |u_\ii^{\sst} - u_\jj^{\sst}| \leq \varepsilon$ for all GFM-IIDGs $\ii,\jj$. Thus, the GFM-IIDGs buses have similar voltage magnitudes in steady state. Moreover, if $\beta_{V_\ii}=\beta_V=0$ and $\beta_{Q_\ii}=\beta_Q=-1+\mu$, with $0<\mu\ll0.01$, for all GFM-IIDGs $\ii$, then the solution $x_\ii^\sst$ of~\eqref{eq:gfm_optimization}, and hence $u_\ii^\sst$, satisfies $u_\ii^\sst\approx0$. Therefore, the steady-state voltage magnitude satisfies $V_\ii=V_{\mathrm{nom}}+u_\ii^\sst\approx\Vnom$.
\end{itemize}
Therefore, with appropriate design choices, the secondary-level controller achieves equal per-unitized reactive power sharing
and voltage regulation among GFM-IIDGs at steady state.
\end{proof}
\subsection{Analysis of GFM-IIDG Frequency Control Loop}
We next establish stability properties of the frequency dynamics~\eqref{eq:freq_dyan}. For the ease of analysis, here, we assume $n_\mathrm{gfm} = n_\mathrm{gfl}$. Let $\delta=[\delta_1,\dots,\delta_{n_\mathrm{gfm}}]^\top\in\mathbb{R}^{n_\mathrm{gfm}}$ and $\omega=[\omega_1,\dots,\omega_{n_\mathrm{gfm}}]^\top\in\mathbb{R}^{n_\mathrm{gfm}}$ denote the system states. With $P_\ii$ and $P_\ii^\mathrm{gfl}$ given by~\eqref{eq:activepower} and~\eqref{eq:gfl_power}, respectively, the dynamics~\eqref{eq:freq_dyan} can be written compactly as
\begin{align}\label{eq:phase_omega_closedloop_vector}
    \dot{\begin{bmatrix} \delta \\ \omega \end{bmatrix}} &= \mathbf{A} \begin{bmatrix} \delta \\ \omega \end{bmatrix} + \mathbf{B} \begin{bmatrix} 0 \\ \Omega \end{bmatrix}  + \mathbf{C} \begin{bmatrix} 0 \\ \boldsymbol{p} \end{bmatrix} + \mathbf{D} \begin{bmatrix} 0 \\ \overline{V} \end{bmatrix},
\end{align}
where, $\Omega = [\Omega_1, \dots, \Omega_{n_\mathrm{gfm}}]^\top \in \mathbb{R}^{n_\mathrm{gfm}}$ with $\Omega_\ii :=   (\omega_\mathrm{nom} + r_{\omega_\ii} P_\ii^{\min})/\tau_{P_\ii}$, $\boldsymbol{p} = [p_1, \dots, p_{n_\mathrm{gfm}}]^\top \in \mathbb{R}^{n_\mathrm{gfm}}$, $\overline{V} = [V^2_1, \dots, V^2_{n_\mathrm{gfm}}]^\top \in \mathbb{R}^{n_\mathrm{gfm}}$ and matrices $\mathbf{A},\mathbf{B}, \mathbf{C}, \mathbf{D}$ are
\begin{align}
    \mathbf{A} &:= \begin{bmatrix}
     0_{n_\mathrm{gfm}}  \ \mathbb{I}_{n_\mathrm{gfm}} \\
     \hspace{-0.12in} \mathbf{G}_{\omega} \ \hspace{0.12in} \boldsymbol{\tau}_P 
    \end{bmatrix}
    \mathbf{B}:= \begin{bmatrix}
     0_{n_\mathrm{gfm}} \ 0_{n_\mathrm{gfm}} \\
     0_{n_\mathrm{gfm}}  \ \mathbb{I}_{n_\mathrm{gfm}} 
    \end{bmatrix}
    \mathbf{C} := \begin{bmatrix}
     0_{n_\mathrm{gfm}} \ 0_{n_\mathrm{gfm}} \\
     \hspace{-0.18in} 0_{n_\mathrm{gfm}} \hspace{0.05in} \boldsymbol{P}
    \end{bmatrix} \nonumber \\
    \mathbf{D} &:= \begin{bmatrix}
     0_{n_\mathrm{gfm}} & 0_{n_\mathrm{gfm}} \\
     0_{n_\mathrm{gfm}} & \text{diag}((-G_{\ii \ii} r_{\omega_\ii})/\tau_{P_\ii})
    \end{bmatrix}, \label{eq:ABC_matrix_omega_delta}
\end{align}  
where, $\boldsymbol{\tau}_{P} := \text{diag}(- 1/\tau_{P_\ii}), [\mathbf{G}_\omega]_{\ii \jj} := -(r_{\omega_\ii}B_{\ii \jj} V_{\ii} V_{\jj})/\tau_{P_\ii}, $ $ [\mathbf{G}_\omega]_{\ii \ii} := -\sum_{\jj} [\mathbf{G}_\omega]_{\ii \jj}, \boldsymbol{P} := \text{diag}(r_{\omega_\ii}(P_\ii^{\max} - P_\ii^{\min})/\tau_{P_\ii})$. Since the phase angles are invariant under uniform shifts, the matrix $\mathbf{G}_\omega$ has a marginal mode associated with the vector $\mathbf{1}_{n_\mathrm{gfm}}$. Hence, stability of~\eqref{eq:phase_omega_closedloop_vector} is studied in relative coordinates. Let $\xi\in\mathbb{R}^{n_\mathrm{gfm}}_{>0}$ denote a normalized left null vector of $\mathbf{G}_\omega$, i.e., $\xi^\top \mathbf{G}_\omega=0$ and $\xi^\top\mathbf{1}_{n_\mathrm{gfm}}=1$, and define the projection matrix
\[
\Pi:=\mathbb{I}_{n_\mathrm{gfm}} -\mathbf{1}_{n_\mathrm{gfm}}\xi^\top .
\]
The projected phase and frequency variables are given by
\[
\tilde{\delta}:=\Pi\delta,\qquad
\tilde{\omega}:=\Pi\omega .
\]
Because $\mathbf{G}_\omega\mathbf{1}_{n_\mathrm{gfm}}=0$, the active-power coupling depends only on the relative phase angle, that is, $\mathbf{G}_\omega\delta=\mathbf{G}_\omega\tilde{\delta}$. Applying the projection to~\eqref{eq:phase_omega_closedloop_vector} yields
\[
\begin{aligned}
\dot{\tilde{\delta}} &= \tilde{\omega},\\
\dot{\tilde{\omega}}
&= \mathbf{G}_\omega\tilde{\delta}
+ \Pi\boldsymbol{\tau}_P \tilde{\omega}
+ \Pi\boldsymbol{\tau}_P\mathbf{1}_{n_\mathrm{gfm}}\omega_\xi
+ \Pi\Omega
+ \Pi\boldsymbol{P}\boldsymbol{p}\\
& \hspace{0.15in} + \Pi\operatorname{diag}\left(\frac{-G_{\ii\ii}r_{\omega_\ii}}{\tau_{P_\ii}}\right)\overline{V},
\end{aligned}
\]
where $\omega_\xi:=\xi^\top\omega$ denotes the weighted average frequency component. In particular, if the active-power filter time constants are identical, i.e., $\tau_{P_\ii}=\tau_P$ for all $\ii$, then $\boldsymbol{\tau}_P=-(1/\tau_P)\mathbb{I}_{n_\mathrm{gfm}}$ and the term $\Pi\boldsymbol{\tau}_P\mathbf{1}_{n_\mathrm{gfm}}\omega_\xi$ vanishes. In this case, the projected dynamics close in $(\tilde{\delta},\tilde{\omega})$ as
\begin{equation} \label{eq:centered_freq_delta}
\begin{aligned}
\hspace{-0.05in}\dot{\tilde{\delta}} &= \tilde{\omega}, \\
\hspace{-0.05in}\dot{\tilde{\omega}}
&= \textstyle \mathbf{G}_\omega\tilde{\delta}
-\frac{1}{\tau_P}\tilde{\omega}
+ \Pi\left[\Omega
+ \boldsymbol{P}\boldsymbol{p}+\operatorname{diag}\left(\frac{-G_{\ii\ii}r_{\omega_\ii}}{\tau_{P_\ii}}\right)\overline{V}\right].
\end{aligned}
\end{equation}

Thus, the marginal absolute-angle direction is removed, and the stability analysis can be carried out on the disagreement subspace. Let $\tilde{x}:=\begin{bmatrix} \tilde{\delta}\\ \tilde{\omega}\end{bmatrix}.$ The projected dynamics~\eqref{eq:centered_freq_delta} can be expressed in a compact form as
\begin{equation}\label{eq:projected_compact}
  \begin{aligned}
    \dot{\tilde{x}} &= \mathbf{A}_\Pi \tilde{x} + \mathbf{B}_\Pi d(t), \\
    \mathbf{A}_\Pi &:=
\begin{bmatrix}
0_{n_\mathrm{gfm}} & \mathbb{I}_{n_\mathrm{gfm}} \\
\mathbf{G}_\omega & -\frac{1}{\tau_P}\mathbb{I}_{n_\mathrm{gfm}}
\end{bmatrix},
\
\mathbf{B}_\Pi
:=
\begin{bmatrix}
0_{n_\mathrm{gfm}}\\
\Pi
\end{bmatrix}, \\
d(t)
&:=
\Omega
+
\boldsymbol{P}\boldsymbol{p}(t)
+
\operatorname{diag}\left(\frac{-G_{\ii\ii}r_{\omega_\ii}}{\tau_P}\right)\overline{V}(t).
\end{aligned}  
\end{equation}
We have the following result.

\begin{theorem}\label{thm:isps_projected_frequency}
Consider dynamics in~\eqref{eq:projected_compact}. Define the disagreement subspace as $\mathcal{S}
:=
\left\{
\tilde{x}\in\mathbb{R}^{2n_\mathrm{gfm}}
|
\xi^\top\tilde{\delta}=0,
\xi^\top\tilde{\omega}=0
\right\},
$
where $\xi$ is a normalized left null vector of $\mathbf{G}_\omega$. Suppose the conditions of Lemma~\ref{lem:A_pi_hurwitz} hold, so that $\mathbf{A}_\Pi$ restricted to $\mathcal{S}$ is Hurwitz. The projected frequency dynamics are input-to-state stable on $\mathcal{S}$ with respect to the effective input $\Pi d(t)$. In particular, there exist constants $\kappa \geq 1$, $\lambda>0$, and $\gamma>0$ such that
\[
\|\tilde{x}(t)\|
\leq
\kappa e^{-\lambda t}\|\tilde{x}(0)\|
+
\gamma
\sup_{0\leq s\leq t}\| \Pi d(s)\|,
\qquad t\geq 0.
\]
Equivalently, if the projected input satisfies $\sup_{t\geq 0}\|\Pi d(t)\|\leq \Delta_d$,
then
\[
\|\tilde{x}(t)\|
\leq
\kappa e^{-\lambda t}\|\tilde{x}(0)\|
+ \gamma \Delta_d,
\qquad t\geq 0.
\]
Therefore, the projected phase-frequency dynamics remain ultimately bounded, with the ultimate bound proportional to the size of the  projected control input.
\end{theorem}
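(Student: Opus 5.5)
The plan is to treat the projected dynamics~\eqref{eq:projected_compact} as a linear time-invariant system driven by the input $\Pi d(t)$. We first confirm that $\mathcal{S}$ is invariant, then apply the variation-of-constants formula together with the exponential bound on the matrix exponential that follows from the Hurwitz property supplied by Lemma~\ref{lem:A_pi_hurwitz}.

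\textbf{Step 1 (invariance of $\mathcal{S}$).} Since $\xi^\top\Pi = \xi^\top - (\xi^\top\mathbf{1}_{n_\mathrm{gfm}})\xi^\top = 0$, every trajectory satisfies $\xi^\top\tilde\delta = \xi^\top\Pi\delta = 0$ and $\xi^\top\tilde\omega=0$. Hence $\tilde x(t)\in\mathcal{S}$ for all $t$.

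We can also check invariance directly from the vector field. Let $\tilde x\in\mathcal S$. Then
\[
\xi^\top\dot{\tilde\delta}=\xi^\top\tilde\omega=0,
\qquad
\xi^\top\dot{\tilde\omega}=\xi^\top\mathbf{G}_\omega\tilde\delta-\tfrac{1}{\tau_P}\xi^\top\tilde\omega+\xi^\top\Pi d=0,
\]
using $\xi^\top\mathbf{G}_\omega=0$. So $\mathcal{S}$ is invariant under $\mathbf{A}_\Pi$, and the input enters only through $\mathbf{B}_\Pi$, whose range, viewed as $[0;\Pi d]$, lies in $\mathcal{S}$. Consequently we may work with the restriction $\mathbf{A}_\Pi|_{\mathcal S}$, represented in a basis $W\in\mathbb{R}^{2n_\mathrm{gfm}\times 2(n_\mathrm{gfm}-1)}$ of $\mathcal S$.

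\textbf{Step 2 (exponential bound).} Because $\mathbf{A}_\Pi|_{\mathcal S}$ is Hurwitz, take any $\lambda$ with $0<\lambda<-\max\operatorname{Re}\,\mathrm{spec}(\mathbf{A}_\Pi|_{\mathcal S})$. Then there exists $\kappa\ge1$ such that $\|e^{\mathbf{A}_\Pi t}z\|\le\kappa e^{-\lambda t}\|z\|$ for all $z\in\mathcal S$ and $t\ge0$.

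Explicitly, solve the Lyapunov equation $\bar A^\top P+P\bar A=-I$ for the restricted matrix $\bar A$ and set
\[
\kappa=\sqrt{\lambda_{\max}(P)/\lambda_{\min}(P)},
\qquad
\lambda=1/(2\lambda_{\max}(P)).
\]
The conversion between basis coordinates and the norm on $\mathcal S$ only changes constants, since $W$ can be taken orthonormal.

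\textbf{Step 3 (variation of constants).} For $t\ge0$ the solution is
\[
\tilde x(t)=e^{\mathbf{A}_\Pi t}\tilde x(0)+\int_0^t e^{\mathbf{A}_\Pi(t-s)}\mathbf{B}_\Pi d(s)\,ds .
\]
Both $\tilde x(0)$ and $\mathbf{B}_\Pi d(s)=[0;\Pi d(s)]$ lie in $\mathcal S$, so the bound of Step 2 applies to each term. This gives
\[
\|\tilde x(t)\|\le\kappa e^{-\lambda t}\|\tilde x(0)\|+\kappa\int_0^t e^{-\lambda(t-s)}\|\Pi d(s)\|\,ds .
\]
Bounding the integrand by its supremum yields the claimed estimate with $\gamma=\kappa/\lambda$.

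The second form follows immediately by substituting $\sup_{t\ge0}\|\Pi d(t)\|\le\Delta_d$. Ultimate boundedness follows by letting $t\to\infty$, which leaves the bound $\gamma\Delta_d$.

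We also need $d$ to be locally essentially bounded so that the integral is defined. This holds because $\boldsymbol p$ is piecewise affine and $\overline V$ is continuous; by Theorem~\ref{thm:centered_invariance}, $\overline V$ is bounded on the invariant set.

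\textbf{Main obstacle.} The delicate point is Step 2, namely that $\mathbf{G}_\omega$ depends on $V$. Strictly, the system is linear time-invariant only if $V$ is frozen, for example at the steady state $V^\star$ of Theorem~\ref{thm:steady_state_existence}. I would state this explicitly, taking $\mathbf{G}_\omega$ evaluated at the operating voltages as Lemma~\ref{lem:A_pi_hurwitz} presumably does.

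If time variation of $V$ must be handled, I would instead use a common quadratic Lyapunov function $P$ valid for all $V$ in the invariant box $\mathcal K_V$, and assume Lemma~\ref{lem:A_pi_hurwitz} provides it. I would then redo Step 2 through $\frac{d}{dt}\tilde x^\top P\tilde x\le -\tilde x^\top \tilde x+2\|P\|\|\tilde x\|\|\Pi d\|$, which gives the same inequality with modified constants.
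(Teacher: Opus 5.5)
Your proof is correct, but it handles the input differently from the paper.

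\textbf{The paper's route.} The paper never writes down the solution formula. It solves $\mathbf{A}_\Pi^\top\mathbf{M}+\mathbf{M}\mathbf{A}_\Pi=-\mathbf{Q}$ on $\mathcal{S}$ and differentiates $W=\tilde{x}^\top\mathbf{M}\tilde{x}$ along trajectories. It absorbs the cross term $2\tilde{x}^\top\mathbf{M}\mathbf{B}_\Pi d$ with Young's inequality to reach $\dot W\le -c_1W+c_2\|\Pi d\|^2$. It then applies the comparison lemma and takes square roots, splitting $\sqrt{a+b}\le\sqrt a+\sqrt b$. This gives $\kappa=\sqrt{\lambda_{\max}(\mathbf{M})/\lambda_{\min}(\mathbf{M})}$, decay rate $c_1/2$ and $\gamma=\sqrt{c_2/(c_1\lambda_{\min}(\mathbf{M}))}$.

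\textbf{Your route.} You use the Lyapunov equation only to bound the restricted matrix exponential. You then treat the input by variation of constants, exploiting that both $\tilde{x}(0)$ and $\mathbf{B}_\Pi d(s)=[0;\Pi d(s)]$ lie in the invariant subspace $\mathcal{S}$.

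\textbf{What each buys.} Your argument is shorter and needs neither Young's inequality nor the square-root splitting. Its decay rate is not traded off against the input gain; in the paper, $c_1$ carries the factor $(1-\eta)$. It also yields the sharper fading-memory estimate $\kappa\int_0^t e^{-\lambda(t-s)}\|\Pi d(s)\|\,ds\le(\kappa/\lambda)(1-e^{-\lambda t})\sup_{s\le t}\|\Pi d(s)\|$. The paper's ISS-Lyapunov argument has more conservative constants. However, it does not rely on superposition through $e^{\mathbf{A}_\Pi t}$, so it is the one that extends to a time-varying $\mathbf{A}_\Pi$ once a common $\mathbf{M}$ is available. That is exactly the fallback you sketch.

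\textbf{Your ``main obstacle''.} You are right that $\mathbf{G}_\omega$ depends on $V$. The paper's proof makes the same implicit choice you propose: it solves a single Lyapunov equation for a fixed $\mathbf{A}_\Pi$, with Lemma~\ref{lem:A_pi_hurwitz} applied at fixed positive voltages, while $\overline{V}(t)$ stays time-varying inside $d(t)$. So your frozen-$V$ reading is what is actually proved.

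Lemma~\ref{lem:A_pi_hurwitz} does not, however, supply a common Lyapunov matrix over $\mathcal{K}_V$. Your time-varying fallback would need that as a genuinely new ingredient, for example an energy function built from $R_\omega^{-1}$ and $L_V$ together with a bound on $\dot V$. One fact helps there. Since $\xi^\top R_\omega L_V=0$ forces $R_\omega\xi\in\ker L_V=\mathrm{span}\{\mathbf{1}\}$, you get $\xi\propto R_\omega^{-1}\mathbf{1}$. Hence $\xi$, $\Pi$ and $\mathcal{S}$ do not depend on $V$; only $\mathbf{A}_\Pi$ varies.
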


\begin{proof}

First note that $\mathcal{S}$ is invariant under~\eqref{eq:projected_compact}. Indeed, if $\tilde{x}\in\mathcal{S}$, then $\xi^\top\dot{\tilde{\delta}}
=
\xi^\top\tilde{\omega}=0,
$
and using $\xi^\top\mathbf{G}_\omega=0$ and $\xi^\top\Pi=0$,
\[
\xi^\top\dot{\tilde{\omega}}
=
\xi^\top\mathbf{G}_\omega\tilde{\delta}
-
\frac{1}{\tau_P}\xi^\top\tilde{\omega}
+
\xi^\top\Pi d(t)
=0.
\]
Hence trajectories initialized in $\mathcal{S}$ remain in $\mathcal{S}$. Note that under the assumptions in Lemma~\ref{lem:A_pi_hurwitz},  $\mathbf{A}_\Pi$ is Hurwitz, and thus for any symmetric positive definite matrix $\mathbf{Q}$ on $\mathcal{S}$, there exists a symmetric positive definite matrix $\mathbf{M}$ on $\mathcal{S}$ such that $\mathbf{A}_\Pi^\top\mathbf{M}+\mathbf{M}\mathbf{A}_\Pi=-\mathbf{Q}$ on $\mathcal{S}$. Consider the Lyapunov function
\[
W(\tilde{x})=\tilde{x}^\top \mathbf{M}\tilde{x}.
\]
Along the trajectories of the projected system,
\begin{align*}
\dot{W}
&= \tilde{x}^\top
\left(
\mathbf{A}_\Pi^\top\mathbf{M}
+
\mathbf{M}\mathbf{A}_\Pi
\right)
\tilde{x}
+
2\tilde{x}^\top \mathbf{M}\mathbf{B}_\Pi d(t)\\
&=
-\tilde{x}^\top\mathbf{Q}\tilde{x}
+
2\tilde{x}^\top \mathbf{M}\mathbf{B}_\Pi d(t).
\end{align*}
Using Young's inequality, for any $\eta\in(0,1)$,
\[
2\tilde{x}^\top \mathbf{M}\mathbf{B}_\Pi d(t)
\leq
\eta\lambda_{\min}(\mathbf{Q})\|\tilde{x}\|^2
+
\frac{\|\mathbf{M}\|^2}{\eta\lambda_{\min}(\mathbf{Q})}
\|\Pi d(t)\|^2.
\]
Thus,
\[
\dot{W}
\leq
-(1-\eta)\lambda_{\min}(\mathbf{Q})\|\tilde{x}\|^2
+
\frac{\|\mathbf{M}\|^2}{\eta\lambda_{\min}(\mathbf{Q})}
\|\Pi d(t)\|^2.
\]
Since, for all $\tilde{x} \in \mathcal{S}$ we have, 
\[
\lambda_{\min}(\mathbf{M})\|\tilde{x}\|^2
\leq
W(\tilde{x})
\leq
\lambda_{\max}(\mathbf{M})\|\tilde{x}\|^2,
\]
there exist constants $c_1>0$ and $c_2>0$ such that
\[
\dot{W}
\leq
-c_1 W
+
c_2\|\Pi d(t)\|^2.
\]
Applying the comparison lemma gives
\[
W(t)
\leq
e^{-c_1 t}W(0)
+
\frac{c_2}{c_1}
\sup_{0\leq s\leq t}\|\Pi d(s)\|^2.
\]
Using the quadratic bounds on $W$ yields
\begin{align*}
\|\tilde{x}(t)\|
&\leq 
\sqrt{\frac{\lambda_{\max}(\mathbf{M})}{\lambda_{\min}(\mathbf{M})}}
e^{\frac{-c_1}{2}t}\|\tilde{x}(0)\|
\\
& \hspace{0.1in} +
\sqrt{\frac{c_2}{c_1\lambda_{\min}(\mathbf{M})}}
\sup_{0\leq s\leq t}\|\Pi d(s)\|.
\end{align*}
Thus, $\kappa := \sqrt{\frac{\lambda_{\max}(\mathbf{M})}{\lambda_{\min}(\mathbf{M})}}, \lambda = \frac{c_1}{c_2}, \gamma := \sqrt{\frac{c_2}{c_1\lambda_{\min}(\mathbf{M})}}$. Therefore, the projected dynamics are input-to-state stable with respect to the projected input $\Pi d(t)$.
\end{proof}

\begin{theorem}\label{thm:isps_full_frequency}
Consider the phase-frequency dynamics whose projected form is given by~\eqref{eq:projected_compact}. Suppose the conditions of Lemma~\ref{lem:A_pi_hurwitz} hold and $\mathcal{S}$ is the subspace as defined in Theorem~\ref{thm:isps_projected_frequency}. Define the weighted average-frequency error $e_\xi(t):=\xi^\top\omega(t)-\omega_{\rm nom}$. Then $e_\xi(t)$ satisfies $\dot{e}_\xi(t)
= -\frac{1}{\tau_P}e_\xi(t)
+
d_\xi(t),$
where $d_\xi(t) := \xi^\top d(t)-\frac{1}{\tau_P}\omega_{\rm nom}$.
Hence, $ |e_\xi(t)|
\leq
e^{-t/\tau_P}|e_\xi(0)|
+
\tau_P
\sup_{0\leq s\leq t}|d_\xi(s)|$.
Finally, the original frequency vector satisfies the decomposition
$\omega(t)-\omega_{\rm nom}\mathbf{1}_{n_\mathrm{gfm}} = \tilde{\omega}(t)+\mathbf{1}_{n_\mathrm{gfm}} e_\xi(t)$. Therefore,
\begin{align*}
\|\omega(t)-\omega_{\rm nom}\mathbf{1}_{n_\mathrm{gfm}}\|
& \textstyle \leq
\kappa e^{-\lambda t}\|\tilde{x}(0)\|
+ \gamma
\sup_{0\leq s\leq t}\|\Pi d(s)\|
\\ 
& \textstyle \hspace{-0.8in}+
\sqrt{n_\mathrm{gfm}} e^{-t/\tau_P}|e_\xi(0)| + \sqrt{n_\mathrm{gfm}} \tau_P
\sup_{0\leq s\leq t}|d_\xi(s)|.
\end{align*}
Equivalently, if $ \sup_{t\geq 0}\|\Pi d(t)\|\leq \Delta_d,
\
\sup_{t\geq 0}|d_\xi(t)| \leq \Delta_\xi$,
then
\begin{align*}
\|\omega(t)-\omega_{\rm nom}\mathbf{1}_{n_\mathrm{gfm}}\|
&\leq
\kappa e^{-\lambda t}\|\tilde{x}(0)\| 
+
\sqrt{n_\mathrm{gfm}} e^{-t/\tau_P}|e_\xi(0)| \\
& \hspace{0.2in} +
\gamma\Delta_d
+
\sqrt{n_\mathrm{gfm}}\tau_P\Delta_\xi.
\end{align*}
Thus, the frequency deviation from nominal synchronization is input-to-state practically stable~\cite{jiang1994small} with respect to the projected input $\Pi d(t)$ and the weighted average-frequency mismatch $d_\xi(t)$.
\end{theorem}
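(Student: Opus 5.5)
The plan is to split $\omega$ into its weighted-average component along $\mathbf{1}_{n_\mathrm{gfm}}$ and its disagreement component $\tilde\omega=\Pi\omega$. Each piece is bounded separately: the disagreement part by Theorem~\ref{thm:isps_projected_frequency}, and the average part by an explicit scalar ODE. The two bounds are then recombined with the triangle inequality.

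\textbf{Step 1: the weighted average.} Left-multiply the $\omega$-row of~\eqref{eq:phase_omega_closedloop_vector} by $\xi^\top$, assuming identical filter constants $\tau_{P_\ii}=\tau_P$ as in~\eqref{eq:centered_freq_delta}. Then:
\begin{itemize}
\item $\xi^\top\mathbf{G}_\omega\delta=0$, because $\xi$ is a left null vector;
\item $\xi^\top\boldsymbol{\tau}_P\omega=-\frac{1}{\tau_P}\xi^\top\omega$;
\item the remaining input terms combine into $\xi^\top d(t)$.
\end{itemize}
This gives $\frac{d}{dt}(\xi^\top\omega)=-\frac{1}{\tau_P}\xi^\top\omega+\xi^\top d(t)$. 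Subtracting the constant $\omega_{\rm nom}$, whose derivative is zero, yields
\[
\dot e_\xi=-\tfrac{1}{\tau_P}e_\xi+\xi^\top d-\tfrac{1}{\tau_P}\omega_{\rm nom}=-\tfrac{1}{\tau_P}e_\xi+d_\xi,
\]
which is the claimed ODE.

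\textbf{Step 2: bound on $e_\xi$.} By variation of constants,
\[
e_\xi(t)=e^{-t/\tau_P}e_\xi(0)+\int_0^t e^{-(t-s)/\tau_P}d_\xi(s)\,ds.
\]
Bounding $|d_\xi(s)|$ by its supremum on $[0,t]$ and using $\int_0^t e^{-(t-s)/\tau_P}ds\le\tau_P$ gives the stated estimate for $|e_\xi(t)|$.

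\textbf{Step 3: decomposition of $\omega$.} Since $\Pi=\mathbb{I}-\mathbf{1}\xi^\top$, we have $\omega=\Pi\omega+\mathbf{1}\xi^\top\omega=\tilde\omega+\mathbf{1}\xi^\top\omega$. Subtracting $\omega_{\rm nom}\mathbf{1}$ gives $\omega-\omega_{\rm nom}\mathbf{1}=\tilde\omega+\mathbf{1}e_\xi$. Apply the triangle inequality and $\|\mathbf{1}\|=\sqrt{n_\mathrm{gfm}}$, then bound $\|\tilde\omega\|\le\|\tilde x\|$ using the ISS estimate of Theorem~\ref{thm:isps_projected_frequency}. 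This produces the combined bound. The uniform-input version follows by substituting $\Delta_d$ and $\Delta_\xi$ for the suprema. The ISpS conclusion is then read off: the transient terms decay exponentially, and the residual is a class-$\mathcal K$ function of the input sizes.

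\textbf{Main obstacle.} The delicate point is applying Theorem~\ref{thm:isps_projected_frequency}, which requires $\tilde x(0)\in\mathcal{S}$. I would check this directly: $\xi^\top\Pi=\xi^\top-(\xi^\top\mathbf{1})\xi^\top=0$ because $\xi^\top\mathbf{1}=1$, so $\tilde\delta=\Pi\delta$ and $\tilde\omega=\Pi\omega$ automatically lie in $\mathcal{S}$. A second point is that $\mathbf{G}_\omega$ depends on $V$. If $V$ varies in time, $\xi$ is not constant and Step 1 picks up a $\dot\xi^\top\omega$ term. I would handle this by treating $V$ as frozen at its steady state from Theorem~\ref{thm:steady_state_existence}, which is the same implicit assumption behind the constant matrix $\mathbf{A}_\Pi$ in Theorem~\ref{thm:isps_projected_frequency}.
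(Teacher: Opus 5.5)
Your proposal is correct and follows essentially the same route as the paper: the scalar ODE for $e_\xi$ obtained from $\xi^\top\mathbf{G}_\omega=0$, variation of constants, the decomposition $\omega=\Pi\omega+\mathbf{1}_{n_\mathrm{gfm}}\xi^\top\omega$, and the ISS bound of Theorem~\ref{thm:isps_projected_frequency}. Your check $\xi^\top\Pi=0$ in fact places $\tilde x(t)$ in $\mathcal{S}$ for every $t$ directly, without the invariance argument the paper uses.

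One correction to your side remark. Since $\mathbf{G}_\omega=-\frac{1}{\tau_P}R_\omega L_V$ with $\ker L_V=\mathrm{span}\{\mathbf{1}_{n_\mathrm{gfm}}\}$, the normalized left null vector is $\xi=R_\omega^{-1}\mathbf{1}_{n_\mathrm{gfm}}/(\mathbf{1}_{n_\mathrm{gfm}}^\top R_\omega^{-1}\mathbf{1}_{n_\mathrm{gfm}})$. This does not depend on $V$, so no $\dot\xi^\top\omega$ term arises, and freezing $V$ is needed only to justify the constant $\mathbf{A}_\Pi$ in Theorem~\ref{thm:isps_projected_frequency}.
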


\begin{proof}
We prove the result in three steps. First, we show that the projected state remains in the disagreement subspace. Second, we apply the ISS estimate for the projected dynamics on this subspace from Theorem~\ref{thm:isps_projected_frequency}. Third, we combine this estimate with the scalar weighted-average frequency dynamics.

For any initial condition $(\delta(0),\omega(0))$, projected variables $(\tilde{\delta}(0), \tilde{\omega}(0))$ satisfy $\xi^\top\tilde{\delta}(0)
= \xi^\top\Pi\delta(0)
=0, \xi^\top\tilde{\omega}(0)=
\xi^\top\Pi\omega(0)
=0.$
Thus,
$\tilde{x}(0)=
\begin{bmatrix}
\tilde{\delta}(0) \ \
\tilde{\omega}(0)
\end{bmatrix}
\in\mathcal S.$

As shown in Theorem~\ref{thm:isps_projected_frequency}, $\mathcal{S}$ is invariant under the projected dynamics~\eqref{eq:projected_compact}.
Therefore, if $\tilde{x}(0)\in\mathcal S$, then $\tilde{x}(t)\in\mathcal S$ for all $t\geq 0$. Hence, the projected trajectory evolves entirely on the disagreement subspace, where the restriction of $\mathbf A_\Pi$ in~\eqref{eq:projected_compact} is Hurwitz by Lemma~\ref{lem:A_pi_hurwitz}.

Consequently, the ISS estimate on $\mathcal{S}$ applies to the projected system. Thus, from Theorem~\ref{thm:isps_projected_frequency}, there exist constants $\kappa\geq 1$, $\lambda>0$, and $\gamma>0$ such that
\[
\|\tilde{x}(t)\|
\leq
\kappa e^{-\lambda t}\|\tilde{x}(0)\|
+
\gamma\sup_{0\leq s\leq t}\|\Pi d(s)\|,
\quad t\geq 0,
\]
where, $\Pi d$ is the projected input. Since $\tilde{\omega}$ is a component of $\tilde{x}$, it follows that
\[
\|\tilde{\omega}(t)\|
\leq
\|\tilde{x}(t)\|
\leq
\kappa e^{-\lambda t}\|\tilde{x}(0)\|
+
\gamma\sup_{0\leq s\leq t}\|\Pi d(s)\|.
\]

It remains to analyze the weighted-average frequency component. Define $\omega_\xi(t):=\xi^\top\omega(t),
\
e_\xi(t):=\omega_\xi(t)-\omega_{\rm nom}$. Using the original frequency dynamics
$\dot{\omega} = \mathbf{G}_\omega\delta
-\frac{1}{\tau_P}\omega
+ d(t)$, where, $d(t)$ is as in \eqref{eq:projected_compact}. Multiplying by $\xi^\top$, we obtain
\begin{align*}
\dot{\omega}_\xi
& \textstyle = \xi^\top\dot{\omega} =
\xi^\top\mathbf{G}_\omega\delta
-\frac{1}{\tau_P}\xi^\top\omega
+
\xi^\top d(t).
\end{align*}
Since $\xi^\top\mathbf{G}_\omega=0$, this reduces to
$ \dot{\omega}_\xi = 
-\frac{1}{\tau_P}\omega_\xi
+
\xi^\top d(t).$ 
Subtracting $\omega_{\rm nom}$ from both sides gives
\[
\dot e_\xi = \textstyle 
-\frac{1}{\tau_P}e_\xi
+
\left(
\xi^\top d(t)-\frac{1}{\tau_P}\omega_{\rm nom}
\right).
\]
Define, $ d_\xi(t):=
\xi^\top d(t)-\frac{1}{\tau_P}\omega_{\rm nom}$. Then $\dot e_\xi = -\frac{1}{\tau_P}e_\xi+d_\xi(t)$.
By the variation-of-constants formula,
\[ \textstyle 
e_\xi(t) = 
e^{-t/\tau_P}e_\xi(0)
+
\int_0^t e^{-(t-s)/\tau_P}d_\xi(s)ds.
\]
Taking absolute values yields
\begin{align*}
|e_\xi(t)|
&\leq \textstyle 
e^{-t/\tau_P}|e_\xi(0)|
+
\int_0^t e^{-(t-s)/\tau_P}|d_\xi(s)|ds \\
& \textstyle \leq
e^{-t/\tau_P}|e_\xi(0)|
+
\sup_{0\leq s\leq t}|d_\xi(s)|
\int_0^t e^{-(t-s)/\tau_P}ds \\
& \textstyle \leq
e^{-t/\tau_P}|e_\xi(0)|
+
\tau_P\sup_{0\leq s\leq t}|d_\xi(s)|.
\end{align*}
Finally, decompose the original frequency vector as
$ \omega
= \Pi\omega+\mathbf{1}_{n_\mathrm{gfm}}\xi^\top\omega = 
\tilde{\omega}+\mathbf{1}_{n_\mathrm{gfm}}\omega_\xi.$
Therefore,
\[
\omega-\omega_{\rm nom}\mathbf{1}_{n_\mathrm{gfm}} = 
\tilde{\omega}
+
\mathbf{1}_{n_\mathrm{gfm}}(\omega_\xi-\omega_{\rm nom}) = \tilde{\omega}
+
\mathbf{1}_{n_\mathrm{gfm}} e_\xi.
\]
Using the triangle inequality,
\begin{align*}
\|\omega(t)-\omega_{\rm nom}\mathbf{1}_{n_\mathrm{gfm}}\|
&\leq
\|\tilde{\omega}(t)\|
+
\|\mathbf{1}_{n_\mathrm{gfm}} e_\xi(t)\| \\
& =
\|\tilde{\omega}(t)\|
+
\sqrt{n_\mathrm{gfm}}|e_\xi(t)|.
\end{align*}
Substituting the bounds on $\tilde{\omega}(t)$ and $e_\xi(t)$ gives
\begin{align*}
\|\omega(t)-\omega_{\rm nom}\mathbf{1}_{n_\mathrm{gfm}}\|
&\leq
\kappa e^{-\lambda t}\|\tilde{x}(0)\|
+
\gamma
\sup_{0\leq s\leq t}\|\Pi d(s)\|\\
&
\hspace{-0.6in} +
\sqrt{n_\mathrm{gfm}}e^{-t/\tau_P}|e_\xi(0)|
+
\sqrt{n_\mathrm{gfm}}\tau_P
\sup_{0\leq s\leq t}|d_\xi(s)|.
\end{align*}
This proves the stated bound on the frequency. If, in addition,
$ \sup_{t\geq 0}\|\Pi d(t)\|\leq \Delta_d, \sup_{t\geq 0}|d_\xi(t)|\leq \Delta_\xi$,
then the preceding estimate immediately implies
\begin{align*}
\|\omega(t)-\omega_{\rm nom}\mathbf{1}_{n_\mathrm{gfm}}\|
&\leq
\kappa e^{-\lambda t}\|\tilde{x}(0)\|
+
\sqrt{n_\mathrm{gfm}}e^{-t/\tau_P}|e_\xi(0)| \\
& \hspace{0.2in} +
\gamma\Delta_d
+
\sqrt{n_\mathrm{gfm}}\tau_P\Delta_\xi.
\end{align*}
Thus, the frequency deviation from nominal synchronization is input-to-state practically stable with respect to the disagreement input $\Pi d(t)$ and the weighted average-frequency mismatch $d_\xi(t)$.
\end{proof}

\subsection{Supporting Results}
\begin{lemma}\label{lem:psd_proof}
 Let $\mathcal{B} \in \mathbb{R}^{n_\mathrm{gfm}\times n_\mathrm{gfm}}$ be a square matrix with $[\mathcal{B}]_{\ii\ii} := -B_{\ii \ii} = -(B^{\mathrm{sh}}_{\ii} + \sum_{\kk \in N_\mathrm{i}} B_\mathrm{ik}), [\mathcal{B}]_{\ii\kk} := B_{\ii\kk}, \kk \neq \ii$, then the matrix $\Sigma \mathcal{B} + \mathcal{B}^\top \Sigma$ with $\Sigma := \text{diag}((\tilde{\beta}_\ii r_{V_\ii}V_\ii)/2\tau^2_{Q_\ii})$ is positive semi-definite.
\end{lemma}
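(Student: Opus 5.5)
The plan is to exploit the sign structure of $\mathcal{B}$ and certify positive semi-definiteness of the symmetric matrix $M:=\Sigma\mathcal{B}+\mathcal{B}^\top\Sigma$ through diagonal dominance, using Gershgorin's theorem.

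\textbf{Structure of $\mathcal{B}$.} By the conventions of Section~\ref{sec:VPP_description}, $B_{\ii\kk}=B_{\kk\ii}<0$ and $B^{\mathrm{sh}}_\ii<0$. Hence $\mathcal{B}$ is symmetric. Its diagonal entries are $[\mathcal{B}]_{\ii\ii}=|B^{\mathrm{sh}}_\ii|+\sum_{\kk\in N_\ii}|B_{\ii\kk}|>0$, and its off-diagonal entries are nonpositive. Equivalently, $\mathcal{B}=\mathcal{L}+\mathrm{diag}(|B^{\mathrm{sh}}_\ii|)$, where $\mathcal{L}$ is the weighted graph Laplacian with edge weights $|B_{\ii\kk}|$. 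In particular, $\mathcal{B}\mathbf{1}_{n_\mathrm{gfm}}=(|B^{\mathrm{sh}}_\ii|)_\ii\ge 0$.

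\textbf{Sign of $\Sigma$.} Write $\sigma_\ii:=\tilde\beta_\ii r_{V_\ii}V_\ii/(2\tau_{Q_\ii}^2)$. Under the standing hypotheses ($V_\ii>0$, $r_{V_\ii}>0$, $\tilde\beta_{V_\ii}>0$), we have $\sigma_\ii\ge 0$ exactly when $\tilde\beta_{Q_\ii}\ge 0$. I would make this assumption explicit; it holds for both design choices used in Theorem~\ref{thm:steady_state}.

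\textbf{Entries of $M$ and the dominance margin.} The diagonal entries are $M_{\ii\ii}=2\sigma_\ii[\mathcal{B}]_{\ii\ii}$ and the off-diagonal entries are $M_{\ii\kk}=(\sigma_\ii+\sigma_\kk)B_{\ii\kk}\le 0$. The Gershgorin margin of row $\ii$ is therefore
\begin{align*}
M_{\ii\ii}-\sum_{\kk\neq\ii}|M_{\ii\kk}| = 2\sigma_\ii|B^{\mathrm{sh}}_\ii|+\sum_{\kk\in N_\ii}(\sigma_\ii-\sigma_\kk)|B_{\ii\kk}|.
\end{align*}
Whenever this margin is nonnegative for every $\ii$, all Gershgorin discs of the symmetric matrix $M$ lie in $[0,\infty)$, so $M\succeq 0$.

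A more intrinsic route is to compute the quadratic form $x^\top Mx=2x^\top\Sigma\mathcal{B}x$ directly, splitting $\mathcal{B}$ into its Laplacian part and its shunt part. For the Laplacian part, the identity
\begin{align*}
x^\top(\Sigma\mathcal{L}+\mathcal{L}\Sigma)x=\sum_{\{\ii,\kk\}}|B_{\ii\kk}|(x_\ii-x_\kk)\bigl((\sigma_\ii+\sigma_\kk)(x_\ii-x_\kk)+(\sigma_\ii-\sigma_\kk)(x_\ii+x_\kk)\bigr)
\end{align*}
separates the sign-definite term from the part caused by heterogeneity in $\sigma$. The heterogeneity term would then be absorbed by Young's inequality into the shunt term $2\sum_\ii\sigma_\ii|B^{\mathrm{sh}}_\ii|x_\ii^2$ and the Laplacian term. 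When all $\sigma_\ii$ are equal, the heterogeneity term vanishes and $M=2\sigma(\mathcal{L}+\mathrm{diag}(|B^{\mathrm{sh}}|))\succeq 0$ immediately.

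\textbf{Main obstacle.} The cross term $(\sigma_\ii-\sigma_\kk)|B_{\ii\kk}|$ is the hard step. For a general positive diagonal $\Sigma$, the product $\Sigma\mathcal{B}+\mathcal{B}\Sigma$ of a diagonal matrix with an M-matrix is not automatically positive semi-definite, so some control on the spread of the $\sigma_\ii$ relative to the shunt susceptances is unavoidable. I would first try to show that the row condition $2\sigma_\ii|B^{\mathrm{sh}}_\ii|\ge\sum_{\kk}(\sigma_\kk-\sigma_\ii)|B_{\ii\kk}|$ holds on the operating region $\Omega(R_V,R_q)$ of Theorem~\ref{thm:centered_invariance}. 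There the voltages satisfy $V_\ii\in[\Vnom-R_V,\Vnom+R_V]$, so the ratio $\sigma_\kk/\sigma_\ii$ is close to $1$ whenever the parameters $\tilde\beta_\ii r_{V_\ii}/\tau^2_{Q_\ii}$ are uniform, as in the designs of Theorem~\ref{thm:steady_state}. If that fails, I would state this dominance inequality as the explicit sufficient hypothesis of the lemma.
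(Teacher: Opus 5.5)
You do not prove the lemma as stated, and no proof can exist: the statement is false. Your suspicion that some control on the spread of the $\sigma_{\ii}$ is unavoidable is exactly right.

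Here is a counterexample that meets every standing hypothesis. Take two buses with $B_{12}=B_{21}=-1$ and $B^{\mathrm{sh}}_{1}=B^{\mathrm{sh}}_{2}=-1$, so $\mathcal{B}=\begin{bmatrix}2&-1\\-1&2\end{bmatrix}$. Set $V_{1}=V_{2}=\Vnom$, $\beta_{V_{\ii}}=\beta_{Q_{\ii}}=0$, $r_{V_{1}}=r_{V_{2}}=r>0$, $\tau_{Q_{1}}=4$ and $\tau_{Q_{2}}=1$. Then $\Sigma=s\,\mathrm{diag}(1,16)$ with $s:=r\Vnom/32$, and
\[
\Sigma\mathcal{B}+\mathcal{B}^\top\Sigma=s\begin{bmatrix}4&-17\\-17&64\end{bmatrix}.
\]
Its determinant is $-33s^2<0$, and for $x=(4,1)^\top$ the form equals $-8s$. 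Your row-$1$ margin is $-13s$, consistent with your analysis.

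Since $V$ sits at the center of $\Omega(R_V,R_q)$ here, restricting to the operating region cannot save the unconditional claim once $r_{V_{\ii}}$ or $\tau_{Q_{\ii}}$ differ across buses. The designs of Theorem~\ref{thm:steady_state} make only $\beta_{V_{\ii}},\beta_{Q_{\ii}}$ uniform, not $r_{V_{\ii}}$ or $\tau_{Q_{\ii}}$. Even with uniform $\tilde\beta_{\ii}r_{V_{\ii}}/\tau_{Q_{\ii}}^2$, your margin reduces to a smallness condition on $R_V$ relative to $|B^{\mathrm{sh}}_{\ii}|/\sum_{\kk}|B_{\ii\kk}|$, which Theorem~\ref{thm:centered_invariance} does not assume. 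So drop the ``first try'' and state the margin inequality as an explicit hypothesis, together with $\tilde\beta_{Q_{\ii}}\ge 0$ (which the paper's proof also silently needs).

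The paper's route is different and broken. With $y=\Sigma^{1/2}x$ one gets $x^\top\Sigma\mathcal{B}x=y^\top\Sigma^{1/2}\mathcal{B}\Sigma^{-1/2}y$. That is a similarity transform of $\mathcal{B}$, not the congruence $\Sigma^{-1/2}\mathcal{B}\Sigma^{-1/2}$ written there. Similarity preserves the spectrum of the nonsymmetric matrix but not the sign of its symmetric part, so positive semidefiniteness of $\mathcal{B}$ (the Gershgorin step) does not transfer.

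Your route, Gershgorin applied to $\Sigma\mathcal{B}+\mathcal{B}^\top\Sigma$ itself, is sound. The entries, the margin $2\sigma_{\ii}|B^{\mathrm{sh}}_{\ii}|+\sum_{\kk}(\sigma_{\ii}-\sigma_{\kk})|B_{\ii\kk}|$, and the edge identity all check out.

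Your edge identity also gives a sharper repair at the only place the lemma is used. Step~5 of the proof of Theorem~\ref{thm:centered_invariance} needs the form only at $x=V$, with $\sigma_{\ii}=c_{\ii}V_{\ii}$ and $c_{\ii}:=\tilde\beta_{\ii}r_{V_{\ii}}/(2\tau_{Q_{\ii}}^2)$. If $c_{\ii}\equiv c\ge 0$, then $\sigma$ is co-monotone with $x$, and your identity yields
\[
V^\top\Sigma(V)\mathcal{B}V=c\Bigl[\sum_{\ii}|B^{\mathrm{sh}}_{\ii}|V_{\ii}^3+\sum_{\{\ii,\kk\}}|B_{\ii\kk}|(V_{\ii}-V_{\kk})^2(V_{\ii}+V_{\kk})\Bigr]\ge 0
\]
for every positive $V$, with no dominance or smallness condition. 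Only for heterogeneous $c_{\ii}$ is some hypothesis of your margin type still needed.
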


\begin{proof} For any $x \in \mathbb{R}^{n_\mathrm{gfm}}$ we have
\[
x^\top (\Sigma \mathcal{B} + \mathcal{B}^\top \Sigma)x = 2x^\top \Sigma \mathcal{B} x,
\]
since $x^\top \Sigma \mathcal{B} x$ is a scalar and equal to its transpose. Since, $V_\ii > 0$, for all $\ii$, thus, $[\Sigma]_{\ii \ii} > 0$ for all $\ii$. Let $y = \Sigma^{1/2} x$ (well-defined since $\Sigma$ is a positive diagonal matrix). Then $x = \Sigma^{-1/2} y$ and
\[
2 x^\top \Sigma \mathcal{B} x 
= 2 y^\top ( \Sigma^{-1/2} \mathcal{B} \Sigma^{-1/2} ) y.
\]
Define $\nu := \Sigma^{-1/2} \mathcal{B} \Sigma^{-1/2}$. Note that $\nu = \sigma^\top \mathcal{B} \sigma$ with $\sigma = \Sigma^{-1/2}$, so $\nu$ is congruent to $\mathcal{B}$, and $\nu \succeq 0$ if and only if $\mathcal{B} \succeq 0$. Therefore, $\Sigma \mathcal{B} + \mathcal{B}^\top \Sigma \succeq 0$ if and only if $\mathcal{B} \succeq 0$, which is indeed the case due to the Gershgorin Circle Theorem.
\end{proof}

\begin{lemma}\label{lem:continuous_sol}
Solution $x_\s$ of~\eqref{eq:gfm_optimization} is continuous in $(V,Q(V))$.  
\end{lemma}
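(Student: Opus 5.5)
The plan is to solve problem~\eqref{eq:gfm_optimization} in closed form and read continuity off the resulting formula. Because of the consensus constraints, every feasible point has the form $x_\s = c\,\mathbf{1}_{n_\mathrm{gfm}}$. The problem therefore reduces to a scalar one: minimize $\frac12\sum_\ii (c-\alpha_\ii)^2$ over $c \in \bigcap_\ii D_\ii^{V_\Delta}$.

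\textbf{Step 1: express the data continuously.} Each interval $D_\ii^{V_\Delta}$ is $[m_\ii - V_\Delta,\, m_\ii + V_\Delta]$, where $m_\ii$ is the center computed in Step~2 of the proof of Theorem~\ref{thm:centered_invariance}. Thus $m_\ii$ and $\alpha_\ii$ are affine functions of $(V,Q^{\mathrm{avg}})$. Consequently the endpoints $\ell := \max_\ii (m_\ii - V_\Delta)$ and $h := \min_\ii (m_\ii + V_\Delta)$ of the intersection are continuous, being max/min of finitely many continuous functions.

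\textbf{Step 2: solve the scalar problem.} The objective equals $\frac{n_\mathrm{gfm}}{2}(c-\bar\alpha)^2$ plus a constant, so it is strictly convex. Whenever $\ell \le h$, the unique minimizer is the projection
$$c^\star = \min\{\max\{\bar\alpha,\ell\},h\}.$$
This is a composition of continuous functions of $(V,Q(V))$. Hence $x_\s = c^\star \mathbf{1}_{n_\mathrm{gfm}}$ is continuous, and composing with the continuous map $V\mapsto Q(V)$ gives continuity in $V$ as well.

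\textbf{Main obstacle: feasibility.} The closed form requires $\ell \le h$, i.e.\ $|m_\ii - m_\jj| \le 2V_\Delta$ for all $\ii,\jj$. This is not guaranteed for arbitrary measurements, and the paper implicitly assumes the problem is well posed (solvable) on the region considered, e.g.\ on $\mathcal K_V$. I would state explicitly that continuity is claimed on the set where the feasible interval is nonempty, which is a closed set defined by continuous inequalities.

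If one also wants robustness when $\ell = h$ (the intersection degenerates to a point), the formula still applies, since it then returns $c^\star = \ell = h$. Alternatively, one can invoke Berge's maximum theorem: the constraint correspondence $(V,Q)\mapsto[\ell,h]$ is continuous and compact-valued where nonempty, and the minimizer is unique by strict convexity. I would present the explicit projection formula as the primary argument, since it is elementary and fully explicit.
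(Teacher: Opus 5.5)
Your proposal is correct and follows essentially the same route as the paper. You use the consensus constraints to reduce to a scalar problem over the interval $[\ell,h]=\bigcap_\ii D_\ii^{V_\Delta}$, write the objective as $\frac{n_\mathrm{gfm}}{2}(c-\bar\alpha)^2$ plus a constant, and read continuity off the clipping formula $\min\{\max\{\bar\alpha,\ell\},h\}$. Your caveat that continuity holds only where $\ell\le h$ matches the paper's own qualification that the result holds on the domain where the intersection is nonempty.
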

\begin{proof}
Due to the consensus constraints, every feasible point has the form
$x_s=c\mathbf 1_{n_{\mathrm{gfm}}}$. Hence, the optimization reduces to the
scalar problem
\[
\min_{c\in\mathcal I(V,Q)}
\|c\mathbf 1_{n_{\mathrm{gfm}}}-\alpha(V,Q)\|_2^2,
\]
where
\begin{align*}
\mathcal I(V,Q)
&:=
\bigcap_{\ii=1}^{n_{\mathrm{gfm}}}
[m_\ii(V,Q)-V_\Delta, m_\ii(V,Q)+V_\Delta] \\
m_\ii(V,Q)
&:=
(1+\beta_{Q_\ii})r_{V_\ii}Q_\ii^{\mathrm{avg}}(t_\s)
- \beta_{V_\ii}(\Vnom - V_\ii (t_\s))
\end{align*}
Writing $\bar\alpha(V,Q):=
\frac{1}{n_{\mathrm{gfm}}}\mathbf 1^\top\alpha(V,Q),$
we have
\[
\|c\mathbf 1-\alpha(V,Q)\|_2^2
=
n_{\mathrm{gfm}}(c-\bar\alpha(V,Q))^2
+
\|\alpha(V,Q)-\bar\alpha(V,Q)\mathbf 1\|_2^2.
\]
Therefore, the unique optimizer is
\[
c_\s(V,Q)=\Pi_{\mathcal I(V,Q)}(\bar\alpha(V,Q)).
\]
Let
\begin{align*}
\ell(V,Q)&:=\max_\ii\{m_\ii(V,Q)-V_\Delta\},
\\
r(V,Q)&:=\min_\ii\{m_\ii(V,Q)+V_\Delta\}.
\end{align*}
Then $\mathcal I(V,Q)=[\ell(V,Q),r(V,Q)]$. Since each $m_\ii(V,Q)$ is
continuous, $\ell$ and $r$ are continuous. Moreover, $\bar\alpha$ is
continuous because $\alpha$ is continuous. Thus
\[
c_\s(V,Q)
=
\min\{\max\{\bar\alpha(V,Q),\ell(V,Q)\},r(V,Q)\}
\]
is continuous. Hence, the solution
\[
x_\s(V,Q)=c_\s(V,Q)\mathbf 1_{n_{\mathrm{gfm}}}
\]
is continuous on the domain where $\mathcal I(V,Q)$ is nonempty.
\end{proof}

\begin{lemma}\label{lem:A_pi_hurwitz}
Suppose that the following conditions hold:
\begin{enumerate}
\item $\tau_{P_\ii}=\tau_P>0 , r_{\omega_\ii}>0, V_\ii>0$ for all $\ii\in \{1,\dots,n_{\mathrm{gfm}}\}$
\item the GFM-IIDG interaction graph is connected.
\end{enumerate}
Let
\[
\mathbf{A}_\Pi
:=
\begin{bmatrix}
0_{n_\mathrm{gfm}} & \mathbb{I}_{n_\mathrm{gfm}}\\
\mathbf{G}_\omega & -\frac{1}{\tau_P}\mathbb{I}_{n_\mathrm{gfm}}
\end{bmatrix}.
\]
Then the restriction of $\mathbf{A}_\Pi$ to the disagreement subspace
\[
\mathcal{S}
:=
\left\{
\tilde{x}=
\begin{bmatrix}
\tilde{\delta}\\
\tilde{\omega}
\end{bmatrix}
\in\mathbb{R}^{2n_\mathrm{gfm}}
|
\xi^\top\tilde{\delta}=0,
\xi^\top\tilde{\omega}=0
\right\},
\]
where $\xi$ is a normalized left null vector of $\mathbf{G}_\omega$, is Hurwitz.
\end{lemma}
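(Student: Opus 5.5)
The plan is to reduce $\mathbf{A}_\Pi$ restricted to $\mathcal{S}$ to a Laplacian-type spectral problem and then analyze the scalar characteristic equations.

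\textbf{Structure of $\mathbf{G}_\omega$.} Under condition 1, $[\mathbf{G}_\omega]_{\ii\jj} = -r_{\omega_\ii}B_{\ii\jj}V_\ii V_\jj/\tau_P \ge 0$ for $\jj\neq\ii$, because $B_{\ii\jj}<0$, and the rows of $\mathbf{G}_\omega$ sum to zero. Hence $\mathbf{G}_\omega = -\mathbf{L}$, where $\mathbf{L} = R\,\mathcal{L}_V$ with $R := \text{diag}(r_{\omega_\ii}/\tau_P)\succ 0$. Here $\mathcal{L}_V$ is the symmetric weighted Laplacian with edge weights $-B_{\ii\jj}V_\ii V_\jj>0$.

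By condition 2, $\mathcal{L}_V$ has a simple zero eigenvalue with eigenvector $\mathbf{1}_{n_\mathrm{gfm}}$, and its remaining eigenvalues are positive. Since $\mathbf{L} = R^{1/2}(R^{1/2}\mathcal{L}_V R^{1/2})R^{-1/2}$ is similar to a symmetric positive semidefinite matrix of the same rank, $\mathbf{L}$ is diagonalizable with real eigenvalues $0=\mu_1<\mu_2\le\dots\le\mu_{n_\mathrm{gfm}}$. Its left null vector is $\xi\propto R^{-1}\mathbf{1}$, which is positive, so $\xi$ is well defined and normalized.

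\textbf{Invariant subspace.} The set $\mathcal{S}$ is $\{\tilde\delta,\tilde\omega\in\ker\xi^\top\}$. The space $\ker\xi^\top$ is spanned by the right eigenvectors $v_2,\dots,v_{n_\mathrm{gfm}}$ of $\mathbf{L}$, since $\xi^\top v_k = 0$ whenever $\mu_k\neq 0$. The subspace $\mathcal{S}$ is $\mathbf{A}_\Pi$-invariant: invariance follows from $\xi^\top\mathbf{G}_\omega = 0$, as already noted in Theorem~\ref{thm:isps_projected_frequency}.

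On each two-dimensional block $\text{span}\{(v_k,0),(0,v_k)\}$, $k\ge 2$, the operator $\mathbf{A}_\Pi$ acts as
\[
\begin{bmatrix} 0 & 1\\ -\mu_k & -1/\tau_P\end{bmatrix}.
\]
Its characteristic polynomial is $s^2 + s/\tau_P + \mu_k$.

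\textbf{Conclusion via Routh--Hurwitz.} Since $1/\tau_P>0$ and $\mu_k>0$, both roots of $s^2 + s/\tau_P + \mu_k$ have negative real part. Therefore every eigenvalue of $\mathbf{A}_\Pi|_{\mathcal{S}}$ has negative real part, i.e., the restriction is Hurwitz.

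\textbf{Main obstacle.} The delicate point is the spectral step: $\mathbf{L}$ is not symmetric, so real, nonnegative, diagonalizable spectrum and the simple zero eigenvalue must be obtained through the similarity transformation and connectivity. One must also check that the eigenvectors for $\mu_k>0$ exactly span $\ker\xi^\top$, so that no eigenvalue at zero survives in $\mathcal{S}$. If I wanted to avoid eigen-decomposition, a fallback is a LaSalle/Lyapunov argument using the energy
\[
\tfrac12\tilde\delta^\top \mathcal{L}_V\tilde\delta + \tfrac12\tilde\omega^\top R^{-1}\tilde\omega,
\]
restricted to $\mathcal{S}$.
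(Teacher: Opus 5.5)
Your proposal is correct and follows essentially the same route as the paper. Both arguments write $\mathbf{G}_\omega=-(1/\tau_P)R_\omega L_V$ with $L_V$ a connected weighted Laplacian, use the similarity $R_\omega^{1/2}(R_\omega^{1/2}L_VR_\omega^{1/2})R_\omega^{-1/2}$ to get a real, diagonalizable spectrum with a simple zero eigenvalue, and then reduce to the scalar polynomials $s^2+s/\tau_P+\mu_k$ (the paper's $s^2+s/\tau_P-\lambda_k$) and apply Routh--Hurwitz. Your explicit check that $\ker\xi^\top=\mathrm{span}\{v_2,\dots,v_{n_\mathrm{gfm}}\}$, so that $\mathcal{S}$ splits into invariant $2\times 2$ blocks, is a slightly cleaner justification than the paper's eigenpair argument and its informal remark that the zero mode is ``removed by the projection.''
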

\begin{proof}
Define the positive diagonal matrix
\[
R_\omega
:=
\operatorname{diag}(r_{\omega_1},\dots,r_{\omega_{n_\mathrm{gfm}}}).
\]
Next, define the symmetric weighted Laplacian $L_V$ associated with the GFM-IIDG interaction graph by
\[
[L_V]_{\ii\jj}
=
B_{\ii\jj}V_\ii V_\jj,
\quad \ii\neq \jj, \ \mbox{and} \
[L_V]_{\ii\ii}
=
\textstyle -\sum_{\jj\neq \ii}B_{\ii\jj}V_\ii V_\jj .
\]
Since $B_{\ii\jj}=B_{\jj\ii}<0$ on every edge and $V_\ii>0$, $L_V$ is a symmetric positive semidefinite weighted Laplacian. Since the GFM-IIDG interaction graph is connected, $
\operatorname{ker}(L_V)=\operatorname{span}\{\mathbf{1}_{n_\mathrm{gfm}}\}.$ Using the definition of $\mathbf{G}_\omega$, we can write
\[
\mathbf{G}_\omega = -\frac{1}{\tau_P}R_\omega L_V.
\]
Because $R_\omega\succ 0$, the matrix $R_\omega L_V$ is similar to a symmetric positive semidefinite matrix.

Indeed,
\[
R_\omega L_V
=
R_\omega^{1/2}
\left(
R_\omega^{1/2}L_VR_\omega^{1/2}
\right)
R_\omega^{-1/2}.
\]
Thus, $R_\omega L_V$ has real nonnegative eigenvalues. Consequently, $\mathbf{G}_\omega$ has real nonpositive eigenvalues. Since the graph is connected, $\mathbf{G}_\omega$ has one zero eigenvalue corresponding to the uniform-angle direction and all remaining eigenvalues are strictly negative. That is,
\[
\lambda_1(\mathbf{G}_\omega)=0,
\qquad
\lambda_k(\mathbf{G}_\omega)<0,
\quad k=2,\dots,n_\mathrm{gfm}.
\]
The zero eigenvalue corresponds to the direction $\mathbf{1}_{n_\mathrm{gfm}}$, which is removed by the projection. Hence, on the disagreement subspace $\mathcal{S}$, only the modes associated with $\lambda_k(\mathbf{G}_\omega)<0$ remain. 

To show that this implies the Hurwitz property of $\mathbf{A}_\Pi$ on the disagreement subspace, we now lift the modal properties of $\mathbf{G}_\omega$ to the second-order phase-frequency dynamics. Since $\mathbf{G}_\omega$ is similar to a symmetric matrix, it is diagonalizable and has real eigenvalues. Let $v_k$ be an eigenvector of $\mathbf{G}_\omega$ associated with a disagreement eigenvalue $\lambda_k<0$, so that
\[
    \mathbf{G}_\omega v_k=\lambda_k v_k .
\]
Consider an eigenpair 
$\left(s,
    \begin{bmatrix}
        \phi\\
        \psi
    \end{bmatrix}
    \right)$
of $\mathbf{A}_\Pi$ on the disagreement subspace. Then
\[
    \begin{bmatrix}
        0_{n_\mathrm{gfm}} & \mathbb{I}_{n_\mathrm{gfm}}\\
        \mathbf{G}_\omega & -\frac{1}{\tau_P}\mathbb{I}_{n_\mathrm{gfm}}
    \end{bmatrix}
    \begin{bmatrix}
        \phi\\
        \psi
    \end{bmatrix}
    =
    s
    \begin{bmatrix}
        \phi\\
        \psi
    \end{bmatrix}.
\]
The first block row gives $\psi=s\phi$. Substituting this relation into the second block row gives
\[
    \mathbf{G}_\omega\phi-\frac{1}{\tau_P}\psi=s\psi.
\]
Using $\psi=s\phi$, we obtain
\[
    \mathbf{G}_\omega\phi-\frac{s}{\tau_P}\phi=s^2\phi,
\]
or equivalently,
\[
    \mathbf{G}_\omega\phi
    =
    \left(s^2+\frac{s}{\tau_P}\right)\phi.
\]
Thus, for each eigenvalue $\lambda_k$ of $\mathbf{G}_\omega$, the corresponding eigenvalues $s$ of $\mathbf{A}_\Pi$ satisfy
\[
    s^2+\frac{1}{\tau_P}s-\lambda_k=0.
\]
On the disagreement subspace, $\lambda_k<0$. Hence,
\[
    \frac{1}{\tau_P}>0,
    \qquad
    -\lambda_k>0.
\]
Therefore, by the second-order Routh-Hurwitz criterion, the polynomial
\[
    s^2+\frac{1}{\tau_P}s-\lambda_k
\]
has both roots in the open left-half complex plane. Consequently, every eigenvalue of $\mathbf{A}_\Pi$ associated with a disagreement mode has a strictly negative real part.

The only eigenvalue of $\mathbf{G}_\omega$ that is not strictly negative is $\lambda_1=0$, which corresponds to the uniform-angle direction $\mathbf{1}_{n_\mathrm{gfm}}$. For this mode, the characteristic equation becomes
$s^2+\frac{1}{\tau_P}s=0,$
whose roots are $s=0, s=-\frac{1}{\tau_P}.$
The zero root corresponds to the absolute-angle mode, which is removed by projection onto the disagreement subspace. Therefore, no marginal mode remains on $\mathcal{S}$, and all eigenvalues of $\mathbf{A}_\Pi$ restricted to $\mathcal{S}$ have strictly negative real parts. Hence, $\mathbf{A}_\Pi$ restricted to $\mathcal{S}$ is Hurwitz.
\end{proof}

\bibliography{references}

\end{document}